\newtheorem{theorem}{Theorem}[section]
\newtheorem{lemma}[theorem]{Lemma}
\newtheorem{proposition}[theorem]{Proposition}
\theoremstyle{definition}
\theoremstyle{remark}
\newtheorem*{remark}{Remark}
\numberwithin{equation}{section}
\newcommand{\tr}{\operatorname{tr}}
\newcommand{\supp}{\operatorname{supp}}
\newcommand{\ran}{\operatorname{ran}}
\newcommand{\rank}{\operatorname{rank}}
\newcommand{\spec}{\operatorname{spec}}
\newcommand{\specd}{\operatorname{spec}_\mathrm{drop}}
\newcommand{\spa}{\operatorname{span}}
\newcommand{\chiln}{\mathcal{X}_\Lambda^n}
\newcommand{\hd}{\mathcal{H}_\Lambda^I}
\newcommand{\x}{\textbf{x}}
\newcommand{\y}{\textbf{y}}
\newcommand{\z}{\textbf{z}}
\newcommand{\w}{\textbf{w}}
\newcommand{\0}{\emptyset}
\newcommand{\B}{\textbf{B}}
\begin{document}
\title{Bounds on the entanglement entropy of droplet states in the XXZ spin chain}
\author{V. Beaud\footnote{vincent.beaud@ma.tum.de}, S. Warzel}
\affil{Zentrum Mathematik, Boltzmannstr.~3, Technische Universit\"at M\"unchen}
\date{September 29, 2017}
\maketitle

\begin{abstract}
	We consider a class of one-dimensional quantum spin systems on the finite lattice $\Lambda\subset\mathbbm{Z}$, related to the XXZ spin chain in its Ising phase. It includes in particular the so-called droplet Hamiltonian. The entanglement entropy of energetically low-lying states over a bipartition $\Lambda = B \cup B^c$ is investigated and proven to satisfy a logarithmic bound in terms of $\min\lbrace n,\vert B\vert,\vert B^c\vert\rbrace$, where $n$ denotes the maximal number of down spins in the considered state. Upon addition of any (positive) random potential the bound becomes uniformly constant on average, thereby establishing an area law. The proof is based on spectral methods: a deterministic bound on the local (many-body integrated) density of states is derived from an energetically motivated Combes--Thomas estimate.
	\vspace{0.7cm}\\
	\noindent\textbf{PACS numbers:} 75.10.Pq, 03.65Ud.\\
	\noindent\textbf{Mathematics Subject Classification (2010):} 82B44.\\
	\noindent\textbf{Keywords:} XXZ spin chain, entanglement entropy, area law, disorder.
\end{abstract}

\maketitle

\section{Introduction}

The bipartite entanglement entropy (EE) is one prominent measure for the correlation structure of a many-body state~\cite{CalCar09,Eisert:2010uq}. In the context of quantum spin systems defined over the vertex set $ \Lambda $ of a finite, but arbitrarily large graph, pure many-body states $ \varrho = | \psi \rangle \langle \psi | $ are identified with normalized vectors $ \psi $ from the tensor product Hilbert space $ \mathcal{H}_\Lambda = \bigotimes_{x \in \Lambda} \mathbbm{C}^\nu $, where $ \nu \in \mathbbm{N} $ denotes the Hilbert space dimension attached to a single vertex. Carving out a subset $ B \subset \Lambda $, the state's information related to observables with support in $B$ is contained in the reduced state $ \varrho_B = \tr_{\mathcal{H}_{B^c}} \varrho $, where the trace is over the tensor component associated with the complement of $ B $ in $ \Lambda $. The family of R\'enyi entropies 
\begin{equation}\label{def:renyi}
	S_\alpha(\psi, B) := \frac{1}{1-\alpha} \ln\tr\left[ \left( \varrho_B\right)^\alpha\right],
\end{equation}
with parameter $ \alpha \in (0,\infty)$, then serves as a non-negative measure of the entanglement of the pure state $ \varrho = |\psi \rangle \langle \psi|$ over the bipartition $ B \cup B^c $. Since $S_\alpha(\psi,B)=S_\alpha(\psi,B^c)$ and to simplify notation, all bounds involving $B$ shall in the sequel implicitly be understood as the minimum over $B$ and $B^c$. The expression~\eqref{def:renyi} coincides with the von Neumann entropy $ S_1(\psi, B) = - \tr \varrho_B \ln \varrho_B $ in the limit $ \alpha \to 1 $ and with the Hartley entropy $ S_0(\psi, B) = \ln \rank \varrho_B $ in the limit $ \alpha \to 0 $. The latter is also referred to as max-entropy since it upper-bounds all other entropies. More generally, it follows from Jensen's inequality that the R\'enyi entropies are monotonically decreasing in $ \alpha $. 

Clearly, any product state has zero entropy over any bipartition. At the other extreme, the EE of any pure state is trivially  bounded by $ S_0(\psi, B)  \leqslant \ln \dim \mathcal{H}_B  = | B | \, \ln \nu $, which grows linearly in the volume $ |B | $. Such a volume dependence is in fact generic for most states as is demonstrated by Page's law on the average entropy of subsystems~\cite{Page:1993ys,Foong:1994kx}. In constrast, the EE of states with a short-range correlation structure is expected to scale with the surface area $ |\partial B | $. Such an area law bound is evident for matrix-product states and their higher dimensional generalizations, the projected entanglement pairs. For the ground state of any gapped spin chain it was derived in the celebrated work of Hastings~\cite{Hastings:2007vn}. Recently, it has been established that exponential clustering of the state is indeed sufficient for the validity of an area law bound in spin chains~\cite{Brandao:2015fk}. Yet another behavior is found for the EE of the ground state of free fermions for which $ S_\alpha(\psi, B) $ scales asymptotically as $  | \partial B | \ln | B | $ for any $ \alpha \in (0,\infty) $ \cite{Wolf:2006fp,Gioev:2006uq,Helling:2011kx}. Finally, quantum field theoretic methods successfully predict the universal scaling of the EE at one-dimensional critical points and relate it to the central charge of the underlying conformal field theory~\cite{Calabrese09}.

\subsection{The XXZ chain and its Ising phase}

In this paper, we will be concerned with energetically low-lying states of XXZ spin chains. The energy of this system without external fields is given by the Hamiltonian
\begin{equation}
	h_\Lambda =   - \Delta^{-1} \sum_{x=-L}^{L-1} \left( S_x^{1} S_{x+1}^{1} + S_x^{2} S_{x+1}^{2}  \right)-\sum_{x=-L}^{L-1} \left(  S_x^{3} S_{x+1}^{3} - \tfrac{1}{4}  \right) + \beta(\Delta)\left( 1 -  S_{-L}^{3} - S_{L}^{3} \right)  \label{eq:HXXZ}
\end{equation}
defined on the Hilbert space $ \mathcal{H}_{\Lambda} = \bigotimes_{x \in \Lambda} \mathbbm{C}^2 $ with $ \Lambda = [-L,L] \cap \mathbbm{Z} $, $L\in\mathbbm{N}$, and $\beta(\Delta) =\frac{1}{2}(1-\Delta^{-1})$. The Pauli spin matrices $ S^{j} $, $ j \in \{ 1,2,3\} $, are normalized to have eigenvalues $ \pm \tfrac{1}{2} $ with corresponding normalized eigenvectors denoted by $ \{ | \frac{1}{2} \rangle , | -\frac{1}{2}  \rangle \} \subset \mathbbm{C}^2 $. The subscript $ x $ indicates that such a matrix is lifted to $  \mathcal{H}_{\Lambda} $ and acts non-trivially only on the $ x $-component of the tensor product. The parameter $ \Delta^{-1} \in [0,\infty) $ measures the anisotropy of the interaction, with $\Delta^{-1} = 1$ covering the isotropic Heisenberg ferromagnet and $\Delta^{-1} = 0$ the Ising case. We will be concerned throughout with the Ising phase of the model, i.e., the regime $ \Delta^{-1} \in [0,1) $. The chosen value of the boundary field, $\beta(\Delta) = \frac{1}{2}(1-\Delta^{-1})$, ensures that down spins at the boundary of $\Lambda$ are not energetically favored. Moreover, it is the minimal value such that the threshold inequalities \eqref{eq:thresholds} underlying the Combes--Thomas estimate \eqref{eq:CT} remain valid.

Two particularities in the spectrum of $h_\Lambda$ deserve to be emphasized. On the one hand, $ h_\Lambda $ commutes with the total magnetization $ S_{\rm tot}^{3} = \sum_{x=-L}^L S_x^3 $, whence its spectrum can be classified according to the eigenvalues of the latter. A natural orthonormal eigenbasis of $S_{\rm tot}^{3}$ then consists of product states $ | \mathbf{\sigma} \rangle = \bigotimes_{x\in \Lambda} | \sigma_x\rangle $ with $ \sigma_x \in \{ \pm  \frac{1}{2}  \} $. The eigenspace $ \mathcal{H}_\Lambda^n $ of  $  S_{\rm tot}^{3} $  with eigenvalue $ \frac{|\Lambda |}{2} - n $ is spanned by states $\vert\mathbf{\sigma}\rangle$  with a constant number $ n \in \{0, 1, 2 , \dots , |\Lambda | \}  $ of down spins. The Hamiltonian $ h_\Lambda $ is then block diagonal with respect to the decomposition $\mathcal{H}_{\Lambda} = \bigoplus_{n=0}^{|\Lambda|} \mathcal{H}_\Lambda^n $ and its spectrum thus consists of the union
	\begin{equation*}
		\spec(h_\Lambda) = \bigcup_{n=0}^{|\Lambda|}\spec\bigl(h_\Lambda\vert_{\mathcal{H}_\Lambda^n}\bigr).
	\end{equation*}
On the other hand, $\spec(h_\Lambda)$ exhibits a series of thresholds. In fact, under the dynamics generated by $h_\Lambda$, spin configurations with down spins arbitrarily distributed over $\Lambda$ are energetically less favorable than configurations with all down spins next to each other, if $\Delta^{-1}\in[0,1)$. In other words, breaking up clusters of down spins costs energy. Accordingly, the Hamiltonian restricted to the span of product states $\vert\sigma\rangle$ with at least $k$ down-spin clusters, $k=1,\ldots,\vert\Lambda\vert$, admits increasing lower bounds (cf.~Lemma~\ref{lemma:thresholds}). More precisely, let us define
	\begin{equation}\label{def:C}
		\mathcal{C}:=\bigl\lbrace\sigma\in\lbrace\pm\tfrac{1}{2}\rbrace^{\vert\Lambda\vert}\; \big\vert\;
			\textnormal{all down spins in }\sigma\textnormal{ form a chain of nearest neighbors}\bigr\rbrace,
	\end{equation}
	the spin configurations with fully clustered down spins, and denote by $\mathcal{Q}$ the orthogonal projection onto $\mathcal{H}_\Lambda\setminus\ell^2(\mathcal{C})$, the subspace generated by vectors with non-clustered down spins. Most relevant for our analysis are the following results.
\begin{proposition}\label{Prop:generalities}
	In the Ising phase $ \Delta^{-1} \in [0, 1) $, we have:
	\begin{enumerate}[label=(\roman*)]
		\item $ h_\Lambda \geqslant 0 $. Zero is a simple eigenvalue of $ h_\Lambda $ with eigenvector $\bigotimes_{x\in \Lambda}|\tfrac{1}{2}\rangle$.
		\item $\mathcal{Q}h_\Lambda\mathcal{Q}\geqslant 2(1-\Delta^{-1})\mathcal{Q}$.
	\end{enumerate}
\end{proposition}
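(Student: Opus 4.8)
\medskip
\noindent\emph{Proof proposal.} The plan is to work in each sector of fixed total magnetization and to reduce both claims to a single diagonal-dominance estimate for a discrete hard-core Schr\"odinger operator. Since $h_\Lambda$ commutes with $S^3_{\mathrm{tot}}$ it is block-diagonal along $\mathcal{H}_\Lambda=\bigoplus_{n=0}^{|\Lambda|}\hln$, and $\mathcal{Q}$ acts within each block; it therefore suffices to bound $h_\Lambda|_{\hln}$ and the compression $\mathcal{Q}h_\Lambda\mathcal{Q}$ block by block. On $\hln$ I would use the orthonormal basis $\{|\sigma\rangle\}$ indexed by the $n$-element set of sites carrying a down spin and, writing the $XY$-part of \eqref{eq:HXXZ} through $S^\pm_x=S^1_x\pm iS^2_x$, split $h_\Lambda|_{\hln}=V-\tfrac{\Delta^{-1}}{2}A$, where $V$ is diagonal and $A$ is the adjacency operator of the graph in which $\sigma\sim\sigma'$ iff $\sigma'$ arises from $\sigma$ by moving one down spin onto an empty nearest-neighbour site. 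The relevant bookkeeping is purely combinatorial: if $\sigma$ consists of $k$ maximal clusters of down spins and $t\in\{0,1,2\}$ of the boundary sites $\pm L$ carry a down spin, then $\sigma$ has $2k-t$ broken bonds and the boundary term contributes $\beta(\Delta)\,t$, so that $V(\sigma)=\tfrac12(2k-t)+\beta(\Delta)\,t=k-\tfrac{\Delta^{-1}}{2}\,t$; moreover a spin interior to a cluster cannot hop (its neighbour is occupied), each endpoint of a cluster not sitting at $\pm L$ contributes exactly one neighbour of $\sigma$ (the outward move, whose target lies in $\Lambda$), and an endpoint at $\pm L$ contributes none, so the degree of $\sigma$ in the full hopping graph is $2(k-t)+t=2k-t$.

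Granting this, both claims follow from the elementary bound
\[
	\langle\psi,(V-\tfrac{\Delta^{-1}}{2}A)\psi\rangle\;\geq\;\sum_\sigma\Bigl(V(\sigma)-\tfrac{\Delta^{-1}}{2}\deg(\sigma)\Bigr)\,|c_\sigma|^2,\qquad c_\sigma:=\langle\sigma|\psi\rangle,
\]
which holds because $\Delta^{-1}\geq0$ and $2\,|\mathrm{Re}(\overline{c_\sigma}c_{\sigma'})|\leq|c_\sigma|^2+|c_{\sigma'}|^2$ on every edge, with $\deg$ the vertex degree in whichever subgraph is in play. The two combinatorial facts combine into the exact cancellation $V(\sigma)-\tfrac{\Delta^{-1}}{2}\deg_{\mathrm{full}}(\sigma)=k(\sigma)\,(1-\Delta^{-1})$. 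For part~(i): on $\hln$ with $n\geq1$ every configuration has $k\geq1$, hence $h_\Lambda|_{\hln}\geq(1-\Delta^{-1})>0$, while on $\mathcal{H}_\Lambda^0=\mathbbm{C}\,\bigotimes_{x\in\Lambda}|\tfrac12\rangle$ one has $h_\Lambda=0$; thus $h_\Lambda\geq0$ and $\ker h_\Lambda=\mathcal{H}_\Lambda^0$ is one-dimensional. For part~(ii): $\ran\mathcal{Q}$ is spanned by configurations $\sigma\notin\mathcal{C}$, which have $k\geq2$, and passing to the subgraph induced on these configurations only decreases the degree relative to $\deg_{\mathrm{full}}$ (the discarded moves are exactly those merging two clusters), so for $\psi\in\ran\mathcal{Q}$
\[
	\langle\psi,h_\Lambda\psi\rangle\;\geq\;\sum_{\sigma\notin\mathcal{C}}k(\sigma)\,(1-\Delta^{-1})\,|c_\sigma|^2\;\geq\;2(1-\Delta^{-1})\,\|\psi\|^2.
\]

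The only genuinely delicate point is the degree count for $A$: one must verify that a spin interior to a cluster is always blocked, that each of the two endpoints of a cluster can hop in exactly one direction, and that its target stays in $\Lambda$ unless the endpoint already sits at $\pm L$ -- only then does $V-\tfrac{\Delta^{-1}}{2}\deg_{\mathrm{full}}=k(1-\Delta^{-1})$ hold without slack, which is precisely what makes the constant $2(1-\Delta^{-1})$ in~(ii) sharp (two widely separated droplets nearly saturate it). Everything else is routine; I note that the same estimate applied to configurations with at least $k_0$ clusters yields lower bounds for $h_\Lambda$ growing linearly in $k_0$, presumably the content of the threshold Lemma~\ref{lemma:thresholds}, and that for part~(i) alone a shorter route is to write $h_\Lambda=\sum_{x=-L}^{L-1}h_{x,x+1}+\beta(\Delta)(1-S^3_{-L}-S^3_L)$ with each two-site term $h_{x,x+1}\geq0$ and, in the Ising phase, of kernel spanned by the two aligned states, intersect these kernels, and eliminate the all-down state by the boundary term.
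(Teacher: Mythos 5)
Your proposal is correct, and it is essentially the approach the paper takes: the paper does not prove Proposition~\ref{Prop:generalities} directly but derives both parts from the threshold bound of Lemma~\ref{lemma:thresholds} (quoted from~\cite{BW17}), whose mechanism is exactly your decomposition of each magnetization block into a non-negative hopping (graph-Laplacian) part plus the diagonal potential $k(\sigma)(1-\Delta^{-1})$ kept as a lower bound. Your combinatorial bookkeeping (domain walls $=2k-t$, degree $=2k-t$, hence the exact cancellation) correctly supplies the details the paper delegates to the reference, and your closing observation that the same estimate restricted to $\geqslant k_0$ clusters yields Lemma~\ref{lemma:thresholds} is precisely right.
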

\begin{remark}
	 While the existence of zero as a simple eigenvalue can be verified by direct computation, both inequalities follow as special cases of Lemma~\ref{lemma:thresholds}. In essence, they rely on the observation that $h_\Lambda$ may be rewritten as a sum of two positive terms, the negative lattice Laplacian and a potential term, which is kept as a lower bound (see for instance~\cite{BW17,Stolz}).
\end{remark}
The energy range of interest to our investigations is the \textit{droplet spectrum} or \textit{droplet band} of $h_\Lambda$ defined as
\begin{equation}\label{def:droplet_band}
	\specd(h_\Lambda):=\spec(h_\Lambda)\cap I,\qquad I=\bigl[0,2(1-\Delta^{-1})\bigr).
\end{equation}
Normalized $\psi\in\ran P_I(h_\Lambda)$ are referred to as \textit{droplet states}, where $P_I(h_\Lambda)$ denotes the spectral projection of $h_\Lambda$ associated to $I$. For reasons inherent to technicalities in the proof of the Combes--Thomas estimate \eqref{eq:CT}, the results presented in this article are restricted to
\begin{equation}\label{def:intervalI}
	I := [0,E_1]\qquad\textnormal{for any fixed }\quad E_1 < 2(1-3\Delta^{-1}).
\end{equation}
This effectively reduces the range of the anisotropy parameter to $\Delta^{-1}\in[0,1/3)$. As illustrated below in the instance of the Ising limit $\Delta^{-1}=0$, the droplet subspace is essentially generated by clustered states $\vert\sigma\rangle$, with $\sigma\in\mathcal{C}$. This aligns in particular with the fact that the energy $2(1-\Delta^{-1})$ corresponds to the two-cluster breakup. Furthermore, as a consequence of the Combes--Thomas estimate~\eqref{eq:CT}, droplets states are clustered, up to exponentially small corrections. More precisely,
\begin{equation*}
	\vert\psi(\mathbf{\sigma})\vert \leqslant C e^{-\mu\,d(\mathbf{\sigma},\mathcal{C})},\qquad\psi\in\ran P_I(h_\Lambda),
\end{equation*}
for some $C,\mu\in(0,\infty)$ independent of $\Lambda$ and $d(\mathbf{\sigma},\mathcal{C})$ the $\ell^1$-distance from $\mathbf{\sigma}$ to the subset of configurations with clustered down spins $\mathcal{C}$ (see Lemmas~\ref{lemma:CT} and~\ref{lemma:density_states}).

It is instructive to consider two special cases: the Ising limit $\Delta^{-1} = 0$ and the classical droplet Hamiltonian~\cite{NS01,NSS07,FS14}, where the boundary field in~\eqref{eq:HXXZ} is set to $\beta(\Delta) = \frac{1}{2}\sqrt{1-\Delta^{-2}}$.
\medskip\\
\noindent \textbf{Ising limit.} The product states $|\sigma \rangle$ are also eigenvectors of $ h_\Lambda $ with eigenvalues given by $ E(\sigma) = -   \sum_{x=-L}^{L-1} \left(  \sigma_x \sigma_{x+1} - \frac{1}{4}  \right) +  \frac{1}{2}\left( 1 -  \sigma_{-L} - \sigma_{L} \right) $. The subspace corresponding to the droplet regime, i.e., linear combinations of spin configurations $\sigma$ with energies $E(\sigma) < 2$, then coincides with the direct sum of subspaces
\begin{equation*}
	\mathcal{D}^n_\Lambda = \spa\left\{ | \mathbf{\sigma} \rangle \in   \mathcal{H}_\Lambda^n\;
		\bigl\vert\;\sigma \in \mathcal{C}  \right\},\qquad n\in\lbrace 0,1,2,\ldots,\vert\Lambda\vert\rbrace,
\end{equation*}
with the convention $ \mathcal{D}^0_\Lambda =  \mathcal{H}_\Lambda^0 $. The EE of any linear combination of such droplets is trivially bounded in terms of the dimension of the image of the reduced state on $ B \subset \Lambda $. The range is however also restricted to the span of:
\begin{itemize}
	\item droplet states in $ \mathcal{H}_B $ which either have $ n = 0 $ or $ n = | B | $ down spins or have a droplet of at most 
$ |B|-1 $ down spins at a boundary of $B$.
	\item the canonical embedding of the projection of $ \psi $ on droplet states in  $ \mathcal{H}_B $.
\end{itemize}
The dimension of this space is at most $ 2 + \vert\partial B\vert(|B|-1) + 1$. This yields:
\begin{proposition}\label{Prop:Ising}
	For all normalized vectors $ \psi \in \bigoplus_{n=0}^{|\Lambda|} \mathcal{D}^n_\Lambda $ and all intervals $ B \subset \Lambda \subset\mathbbm{Z}$, we have
	\begin{equation}\label{eq:boundIsing}
		S_0(\psi;B) \leqslant  \ln \bigl(3 + \vert\partial B\vert(|B|-1)\bigr),
	\end{equation}
	where $\partial B := \bigl\lbrace (x,y)\;\big\vert\; \vert x-y\vert=1, x\in B, y\in B^c\bigr\rbrace$ denotes the boundary of $B$. 
\end{proposition}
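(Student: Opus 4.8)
The plan is to bound the Schmidt rank of $\psi$ across the cut $B \mid B^c$. Since $S_0(\psi;B) = \ln \rank \varrho_B$ and $\varrho_B = \tr_{\mathcal{H}_{B^c}} |\psi\rangle\langle\psi|$, writing $\psi$ in a product basis and carrying out the partial trace gives $\varrho_B = \sum_\tau |v_\tau\rangle\langle v_\tau|$, where $\tau$ runs over the product basis of $\mathcal{H}_{B^c}$ and $v_\tau \in \mathcal{H}_B$ denotes the partial inner product of $\psi$ with $|\tau\rangle$ on the $B^c$-factor. Consequently $\ran \varrho_B \subseteq \spa\{v_\tau : \tau\}$, so $\rank \varrho_B \leqslant \dim \spa\{v_\tau : \tau\}$, and it suffices to exhibit a subspace $W \subseteq \mathcal{H}_B$ with $v_\tau \in W$ for every $\tau$ and $\dim W \leqslant 3 + |\partial B|(|B|-1)$.

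Write the interval as $B = \{a,\dots,b\}$, so that $B^c$ is the disjoint union of the two (possibly empty) intervals $B_- = \{x\in\Lambda : x < a\}$ and $B_+ = \{x\in\Lambda : x > b\}$, and note that $|\partial B|$ equals the number of non-empty ones among $B_-, B_+$. Expanding $\psi = \sum_{\sigma\in\mathcal{C}} c_\sigma |\sigma\rangle$ gives $v_\tau = \sum_{\sigma\in\mathcal{C}:\ \sigma|_{B^c}=\tau} c_\sigma |\sigma|_B\rangle$. The structural input is that for $\sigma\in\mathcal{C}$ the set $D(\sigma)$ of down-spin sites is a single interval of $\Lambda$; hence $D(\sigma)\cap B$ is an interval in $B$ which, as long as $\sigma$ has at least one down spin outside $B$ (equivalently $\tau \not\equiv \tfrac{1}{2}$), is either empty, or all of $B$, or a block abutting exactly one endpoint of $B$ — and in the last case the abutting endpoint is the one adjacent to the component of $B^c$ that $D(\sigma)$ meets. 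A short case distinction on how the down-spin interval of a clustered $\sigma$ compatible with a given $\tau$ can sit relative to $B$ (it may avoid $B$ entirely, it may straddle all of $B$ when $\tau$ has down spins in both $B_-$ and $B_+$, or it may protrude into $B$ from exactly one side) then shows that every $v_\tau$ lies in the span $W$ of: the all-up product state $u$ and the all-down product state $v$ of $\mathcal{H}_B$; the single vector $\phi := v_\tau$ for the $\tau$ that is all-up on $B^c$; and, for each non-empty component of $B^c$, the at most $|B|-1$ product states on $B$ consisting of a block of between $1$ and $|B|-1$ down spins attached to the corresponding endpoint of $B$. A boundary of $B$ contributes its $|B|-1$ vectors to this list precisely when the adjacent component of $B^c$ is non-empty, i.e.\ exactly $|\partial B|$ times, so $\dim W \leqslant 2 + 1 + |\partial B|(|B|-1) = 3 + |\partial B|(|B|-1)$, which is the claim. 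The degenerate cases are immediate: for $B = \Lambda$ one has $\rank\varrho_B = 1 \leqslant 3$, and for $|B| = 1$ one has $\rank\varrho_B \leqslant \dim\mathcal{H}_B = 2 \leqslant 3$.

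The only genuine content is the case analysis verifying, for each admissible $\tau$, that $\{\sigma|_B : \sigma\in\mathcal{C},\ \sigma|_{B^c}=\tau\}$ has the claimed restricted form; everything else is the standard identity $\rank\varrho_B = \dim\spa\{v_\tau\}$. This is exactly where the hypothesis $\psi\in\bigoplus_n\mathcal{D}^n_\Lambda$ (rather than a general droplet state) is used in an essential way: a clustered configuration carries a single down-spin block, and a single interval can protrude from $B$ only through one of its two endpoints, which is the mechanism producing the factor $|\partial B|$. I expect the only subtlety to be careful bookkeeping — avoiding double counting across the components of $B^c$, and keeping the sub-case in which the protruding block fills all of $B$ (contributing only $v$) separate from the sub-case in which it occupies a proper boundary segment.
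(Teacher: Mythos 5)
Your proof is correct and follows essentially the same route as the paper: you bound $\rank\varrho_B$ by the dimension of the span of the all-up and all-down states on $B$, the boundary-attached down-spin blocks of length $1$ to $|B|-1$ (contributing $|\partial B|(|B|-1)$), and the single vector given by the projection of $\psi$ onto configurations supported in $B$, which is exactly the paper's counting $2+|\partial B|(|B|-1)+1$. Your write-up merely makes explicit the case analysis that the paper only sketches.
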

The bound \eqref{eq:boundIsing} may be further tightened when restricting to normalized $\psi^{(n)}$ with at most $n$ down spins, i.e., $\psi^{(n)}\in\bigoplus_{k=0}^n\mathcal{D}^k_\Lambda$. The same arguments as stated above then lead to
\begin{equation}\label{eq:boundIsingn}
	S_0(\psi^{(n)};B) \leqslant  \ln \bigl(3 + 2(\min\lbrace n,|B|\rbrace-1)\bigr). 
\end{equation}
This bound is asymptotically optimal and saturated if $\psi^{(n)}$ is for instance chosen as a superposition of clusters $\vert\sigma\rangle$, $\sigma\in\mathcal{C}$, of $n$ down spins uniformly distributed over $\Lambda$. In fact, also a superposition of clusters uniformly distributed over $B$ and its outer boundary will do.
\medskip\\
\noindent\textbf{Classical droplet Hamiltonian.} This Hamiltonian, $H_\Lambda^d$, equals the right-hand side of~\eqref{eq:HXXZ} with $\beta(\Delta)=\frac{1}{2}\sqrt{1-\Delta^{-2}}$. The chosen value of the boundary field naturally arises when considering spectral properties of $H_\Lambda^d$ in the thermodynamic limit $\Lambda\to\mathbbm{Z}$. Here, it appears as an instance of the Hamiltonian $H_\Lambda$ below, where positive potential terms are added to $h_\Lambda$. Its spectrum has been studied in detail on the subspaces $\mathcal{H}_\Lambda^n$ with exactly $n\in\lbrace 0,1,2,\ldots,\vert\Lambda\vert\rbrace$ down spins~\cite{NS01,NSS07,FS14}:
\begin{enumerate}
	\item Zero is a simple eigenvalue of $H_\Lambda^d\vert_{\mathcal{H}_\Lambda^0}$.
	\item For sufficiently large $\Delta$ or sufficiently large $n$, the $\vert \Lambda\vert - n +1 $ lowest eigenvalues of $H_\Lambda^d\vert_{\mathcal{H}_\Lambda^n}$ are concentrated in a neighborhood of $\sqrt{1-\Delta^{-2}}$, whose width decreases to $0$ as $n\to\infty$. These eigenvalues are separated from the rest of the spectrum by a gap, which increases in $n$ and reaches the limit $1-\Delta^{-1}$ as $n\to\infty$.
\end{enumerate}
The union over $n=0,1,\ldots,\vert\Lambda\vert$ of these eigenvalues are traditionally referred to as droplet spectrum or droplet band of $H_\Lambda^d$. For sufficiently large $\Delta$, this in fact coincides with our previous definition \eqref{def:droplet_band}, where the droplet band consisted of all eigenvalues between $0$ and $2(1-\Delta^{-1})$. The dimension $\vert\Lambda\vert -n +1$ of the droplet subspace associated to $H_\Lambda^d\vert_{\mathcal{H}_\Lambda^n}$ corroborates the previous argument that droplet states are essentially generated by clustered states $\vert\sigma\rangle$, $\sigma\in\mathcal{C}$.

\subsection{Main result}

In view of the results on the stability of the area law for the EE in gapped spin chains~\cite{Marien:2016fk}, it is not surprising to learn that a bound of the form~\eqref{eq:boundIsingn} remains valid (at least for $ \alpha = 1 $) for the droplet Hamiltonian $H_\Lambda^d$, see Theorem~\ref{thm:entropy} below. After all, the above characterization of the droplet spectrum guarantees that a gap is open for sufficiently large $\Delta$. Remarkably, such a bound remains true for a much larger class of Hamiltonians related to the XXZ spin chain in its Ising phase. More precisely, our result pertains to adding an arbitrary non-negative term to $h_\Lambda$:
\begin{equation}
	H_\Lambda := h_\Lambda + V_\Lambda \qquad\textnormal{with}\quad V_\Lambda = \sum_{x\in\Lambda} b_x \left(\frac{1}{2} - S^3_x\right) \, , \qquad b_x \geqslant 0. 
\end{equation}
The droplet spectrum is now defined as $\specd(H_\Lambda)= \spec(H_\Lambda)\cap I$ with $I = [0,E_1]$ for any fixed $E_1<2(1-3\Delta^{-1})$ and the subspace of droplet states is $\hd = \ran P_I(H_\Lambda)$. The potential term $V_\Lambda$ may---and generally will---pollute a possible gap present above $\specd(H_\Lambda)$. However, the EE of arbitrary droplet states is still bounded as suggested by Proposition~\ref{Prop:Ising}.
\begin{theorem}\label{thm:entropy}
	Let $I=[0,E_1]$ for some $E_1<2(1-3\Delta^{-1})$. Then, for any $\alpha>0$, there exist $a_\alpha, b_\alpha,C_\alpha \in (0,\infty)$ such that for all non-negative fields $\lbrace b_x\rbrace_{x\in\Lambda}$ and all normalized $\psi^{(n)}\in\hd$ with at most $0\leqslant n\leqslant |\Lambda|$ down spins
	\begin{equation}\label{eq:entropy_general}
		S_\alpha(\psi^{(n)},B) \leqslant C_\alpha\ln\bigl(a_\alpha + b_\alpha\min\lbrace n,\vert B\vert\rbrace\bigr)
	\end{equation}
	for all finite intervals $B\subset\Lambda\subset\mathbbm{Z}$.
\end{theorem}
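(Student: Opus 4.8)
The plan is to reduce the claim, for all $\alpha>0$ at once, to a single ``staircase'' decay estimate for the ordered eigenvalues $\lambda_1\geqslant\lambda_2\geqslant\cdots\geqslant 0$ of the reduced state $\varrho_B:=\tr_{\mathcal{H}_{B^c}}|\psi^{(n)}\rangle\langle\psi^{(n)}|$, and then to deduce that estimate from the Combes--Thomas bound of Lemma~\ref{lemma:CT} together with the density-of-states bound of Lemma~\ref{lemma:density_states}. Since $S_\alpha(\psi^{(n)},B)=S_\alpha(\psi^{(n)},B^c)$ I may and will assume $|B|\leqslant|B^c|$, so that $\min\{n,|B|\}$ in fact stands for $\min\{n,|B|,|B^c|\}$. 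The estimate to be proved is: there are constants $C,c,a_0,b_0\in(0,\infty)$ and an exponent $p\geqslant 1$, all independent of $\Lambda$, $n$, $B$, $\{b_x\}$ and $\psi^{(n)}$, such that $\sum_{j>(d+1)^p(a_0+b_0\min\{n,|B|\})}\lambda_j\leqslant C e^{-cd}$ for every integer $d\geqslant 0$.

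Granting this, the theorem follows by an elementary computation with no further input. For $\alpha\in(0,1)$ one splits the index set into the blocks determined by consecutive values of $d$; on each block H\"older's inequality bounds $\sum\lambda_j^\alpha$ by $(\text{block size})^{1-\alpha}$ times $(\text{tail mass})^\alpha$, and since the block size grows only polynomially in $d$ while the tail mass decays exponentially, one gets $\sum_j\lambda_j^\alpha\leqslant (a_0+b_0\min\{n,|B|\})^{1-\alpha}D_\alpha$ with $D_\alpha<\infty$, hence $S_\alpha\leqslant \ln(a_0+b_0\min\{n,|B|\})+\tfrac{\ln D_\alpha}{1-\alpha}$, which is of the asserted form. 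It is precisely here that the exponential rate matters: for $\alpha<1$ the R\'enyi entropy is sensitive to the (possibly $\Lambda$-dependent) number of tiny eigenvalues of $\varrho_B$, and only a decay faster than every power of $d$ tames it. For $\alpha>1$ one fixes $d^\ast$ with $Ce^{-cd^\ast}<\tfrac12$; then the first $K^\ast:=(d^\ast+1)^p(a_0+b_0\min\{n,|B|\})$ eigenvalues carry mass $\geqslant\tfrac12$, so by the power-mean inequality $\sum_j\lambda_j^\alpha\geqslant (K^\ast)^{-(\alpha-1)}2^{-\alpha}$ and $S_\alpha\leqslant\ln K^\ast+\tfrac{\alpha\ln 2}{\alpha-1}$, again of the required form. (The endpoint $\alpha=0$ is excluded because $\rank\varrho_B$ is genuinely $\Lambda$-dependent in general.)

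For the staircase estimate itself, Ky Fan's variational principle gives $\sum_{j>r}\lambda_j=\min\{\tr[(\mathbbm{1}-Q)\varrho_B]\colon Q=Q^\ast=Q^2\text{ on }\mathcal{H}_B,\ \rank Q\leqslant r\}$, with $\tr[(\mathbbm{1}-Q)\varrho_B]=\|((\mathbbm{1}-Q)\otimes\mathbbm{1}_{B^c})\psi^{(n)}\|^2$; equivalently (Eckart--Young), it suffices to approximate $\psi^{(n)}$ in norm by a vector $\phi_d$ whose Schmidt rank across the bipartition $B\cup B^c$ is at most $(d+1)^p(a_0+b_0\min\{n,|B|\})$. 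The natural choice is the truncation $\phi_d:=\mathbbm{1}_{\{d(\cdot,\mathcal{C})\leqslant d\}}\psi^{(n)}$ onto configurations within $\ell^1$-distance $d$ of the fully clustered set $\mathcal{C}$. Since $\psi^{(n)}=P_I(H_\Lambda)\psi^{(n)}$, the error is $\|\psi^{(n)}-\phi_d\|=\|\mathbbm{1}_{\{d(\cdot,\mathcal{C})>d\}}P_I(H_\Lambda)\psi^{(n)}\|\leqslant\|\mathbbm{1}_{\{d(\cdot,\mathcal{C})>d\}}P_I(H_\Lambda)\|$, which is exponentially small in $d$ by Combes--Thomas; one may alternatively sandwich $\varrho_B\leqslant\tr_{\mathcal{H}_{B^c}}P_I(H_\Lambda)$ and work with eigenvalues of this local density-of-states operator, which is convenient for organizing the counting. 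The content of Lemma~\ref{lemma:density_states} is then the bound on the Schmidt rank of $\phi_d$: a configuration at distance $\leqslant d$ from $\mathcal{C}$ is a single interval of down spins carrying at most $d$ defects, and since $d(\cdot,\mathcal{C})\leqslant d$ those defects sit within distance $d$ of the cluster's two endpoints (a defect far from the cluster would push the configuration far from $\mathcal{C}$); restricting to the interval $B$ yields, up to such defects, an interval of down spins that either lies strictly inside $B$ (hence paired rigidly with the vacuum on $B^c$, contributing one Schmidt direction), or abuts exactly one end of $B$ with length $\leqslant\min\{n,|B|\}$ (by the down-spin count), or exhausts $B$. Summing these contributions gives a Schmidt rank at most polynomial in $d$ times $a_0+b_0\min\{n,|B|\}$ --- linear rather than quadratic in $\min\{n,|B|\}$ because a single interval cannot have a long tail beyond both ends of $B$, and $\Lambda$-independent because the defects are spatially confined.

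The main obstacle is exactly this last counting step, i.e.\ establishing Lemma~\ref{lemma:density_states} with the correct quantitative dependence: $\Lambda$-independent, polynomial in $d$, and linear in $\min\{n,|B|,|B^c|\}$. A crude enumeration of the configurations occurring in $\phi_d$ is hopelessly lossy --- it is $\Lambda$-dependent and, fed into the R\'enyi computation of the second step, would converge only for $\alpha$ close to $1$; the gain must come from recognising that a large family of those configurations collapses, after restriction to $B$, to a single Schmidt direction, which forces one to use the rigid single-interval structure of the droplet band (Proposition~\ref{Prop:generalities} and the threshold estimates of Lemma~\ref{lemma:thresholds}) together with the \emph{quantitative} exponential localization of droplet states rather than mere support information. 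Once Lemma~\ref{lemma:density_states} is available with these features, the reduction of the first step and the Ky Fan / Eckart--Young argument of the second are routine.
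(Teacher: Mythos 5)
Your overall strategy---control the tail of the Schmidt spectrum of $\varrho_B$ via a low-Schmidt-rank approximant of $\psi^{(n)}$ (Ky Fan/Eckart--Young), then convert the tail bound into a R\'enyi bound---is a genuinely different route from the paper's, which instead bounds $\tr[(\varrho_B)^\alpha]$ by the $\alpha$-th powers of the diagonal entries via Jensen's inequality and then invokes the summability lemmas of the appendix. The second step of your reduction (blocks, H\"older, power-mean) is fine. But the staircase estimate itself rests on two quantitative claims that do not hold as stated. First, the truncation error: you bound $\Vert\psi^{(n)}-\phi_d\Vert\leqslant\Vert\mathds{1}_{\{d(\cdot,\mathcal{C})>d\}}P_I(H_\Lambda)\Vert$ and assert this is exponentially small in $d$ ``by Combes--Thomas''. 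What Lemma~\ref{lemma:density_states} provides is the diagonal bound $N_I(\x)\leqslant Ce^{-\mu d(\x,\mathcal{C})}$; summing it over $\{\x:\,d(\x,\mathcal{C})>d\}$ picks up a factor proportional to the number of admissible droplet positions, i.e.\ to $\vert\Lambda\vert$ per particle sector, and neither the trace nor the operator norm of $\mathds{1}_{\{d(\cdot,\mathcal{C})>d\}}P_I\,\mathds{1}_{\{d(\cdot,\mathcal{C})>d\}}$ is controlled uniformly in $\Lambda$ by the cited lemmas. The repair is exactly the bookkeeping the paper performs before summing: the components of $\psi^{(n)}$ with all particles in $B$ (resp.\ all in $B^c$) contribute at most rank one each to $\varrho_B$ and must be split off first (this is the role of $P_n$ and of the configurations $\0$ and $\B$ in the paper's proof); only for the remaining \emph{straddling} configurations is the cluster pinned near $\partial B$, making the tail sum $\Lambda$-independent and linear in $\min\{n,\vert B\vert\}$ (Lemma~\ref{lemma:summability_2}). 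Without this step your staircase constants depend on $\Lambda$.

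Second, the rank count. The set of restrictions to $B$ of configurations within $\ell^1$-distance $d$ of $\mathcal{C}$ does not have cardinality $(d+1)^p(a_0+b_0\min\{n,\vert B\vert\})$: with displacement budget $d$, the defect patterns near an endpoint of $B$ are indexed by partition-like data, of which there are of order $e^{c\sqrt{d}}$---superpolynomial in $d$. (This is precisely what the product $\prod_{k}(1-e^{-k\mu})^{-2}$ in Lemma~\ref{lemma:summability_1} encodes.) Saying that the defects ``sit within distance $d$ of the cluster's endpoints'' confines them spatially but does not reduce their number to a polynomial. The approach survives because $e^{c\sqrt{d}}$ against $e^{-c'd}$ still closes the $\alpha\in(0,1)$ computation for every $\alpha>0$, but the estimate as you state it is false, and the burden you place on Lemma~\ref{lemma:density_states} is misdirected: that lemma is a pointwise amplitude bound, not a Schmidt-rank bound; the counting you need is the content of the appendix lemmas, which your proposal does not reproduce.
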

\noindent It is worthwhile noting that:
\begin{enumerate}
	\item The inequality (\ref{eq:entropy_general}) is merely an upper bound and there are states in $\hd$ with much lower entanglement entropy. As a product state, the ground state $\bigotimes_{x\in\Lambda}\vert\frac{1}{2}\rangle$  has for instance $S_\alpha = 0$. 
	\item Even though the above statement concerns the low-energy spectrum of $H_\Lambda$, it is not restricted to the lowest (non-degenerate) eigenvalue. The droplet states include the ground state of $H_\Lambda$ but are, in general, \emph{not} ground states.
		\item Depending on $\lbrace b_x\rbrace_{x\in\Lambda}$, the subspace of droplet states $\hd$ may be empty and the above assertion thus trivial.
	\item As already suggested by the Ising case $\Delta^{-1}=0$ in Proposition~\ref{Prop:Ising}, there is no area law for general droplet states $\psi\in\hd$. Moreover, the logarithmic bound is sharp and saturated by a superposition of droplets (with $n$ down spins) uniformly distributed over $B$. The intuition is again by counting states and noting that droplets have effective dimension one.

The scaling behavior of the EE of ground states in other spin chains such as the XY chain or, more generally, the XXZ chain in a critical regime $ |\Delta^{-1}| \geqslant 1 $ of the anisotropy parameter  is known to be logarithmic in $ | B | $ \cite{Latorre09}. This should however not be confused with the logarithmic bound~\eqref{eq:entropy_general}, which does not result from the criticality of eigenstates but rather from the effective dimension of the droplet space.
	\item Quantum quench protocols study the time-evolution $\psi_t = e^{-itH_\Lambda}\psi$ of initial vectors
	$ \psi $, which are usually prepared as  low-entanglement eigenstates of some Hamiltonian other then the evolution's generator. The bound~\eqref{eq:entropy_general} does not apply to $ \psi_t $ for arbitrary  initial states, but for the large class of low-energy states $\psi\in\hd$, whose one-dimensional subspaces are generally not invariant under $H_\Lambda $. Theorem~\ref{thm:entropy} then shows that the  dynamical EE  of $ \psi_t $ is bounded uniformly in $t\in\mathbbm{R}$.
\end{enumerate}
The result of Theorem~\ref{thm:entropy} can be significantly improved if the collection of (non-negative) potentials is made random. 
\begin{theorem}\label{thm:entropy_random}
	Let the setting be as in Theorem~\ref{thm:entropy}. Then, for any $\alpha>0$ and any $0<\epsilon<\min\lbrace\alpha,1\rbrace$, there exists $C_\epsilon\in(0,\infty)$ such that for all collections $\lbrace b_x\rbrace_{x\in\Lambda}$ of iid random variables with non-trivial support $\supp b_x\subset [0,\infty)$
	\begin{equation}\label{eq:entropy_random}
		\mathbb{E}\left[\sup\limits_{\substack{\psi\in\hd,\Vert\psi\Vert =1}}
			\exp\bigl\lbrace (1-\epsilon)S_\alpha(\psi,B)\bigr\rbrace\right]\leqslant C_\epsilon
	\end{equation}
	for all finite intervals $B\subset\Lambda\subset\mathbbm{Z}$.
\end{theorem}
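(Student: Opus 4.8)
The plan is to run the argument behind Theorem~\ref{thm:entropy}, but to replace its crude combinatorial factor $\min\{n,|B|\}$ by a genuinely small random quantity, exploiting that a positive random potential forbids long down-spin droplets from sitting across a boundary edge of $B$. Throughout, write $B=[a,b]\cap\mathbbm{Z}$, so that $\partial B$ consists of at most the two edges $(a-1,a)$ and $(b,b+1)$; since the $b_x$ are i.i.d., the analysis at each edge is independent of $\Lambda$ and of where $B$ sits inside $\Lambda$, which is ultimately what yields a bound free of any volume dependence.

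\textbf{Step 1: reduction to a local counting estimate.} As in the proof of Theorem~\ref{thm:entropy}, the entropy $S_\alpha(\psi,B)$ of a droplet state $\psi\in\hd$ is controlled through the Schmidt decomposition of $\varrho_B$ together with Lemmas~\ref{lemma:CT} and~\ref{lemma:density_states}: up to a remainder that is exponentially small in the $\ell^1$-distance to the clustered set $\mathcal{C}$ of~\eqref{def:C}, the Schmidt vectors on the $B$-side are the all-up vector, the all-down vector, and down-spin chains anchored at an edge of $\partial B$, and the weight of a chain of length $\ell$ at a fixed edge is bounded by $\tr[\Pi_\ell\,P_I(H_\Lambda)]$, where $\Pi_\ell$ is the projection onto the (mutually orthogonal, over $\ell$) configurations whose restriction to $B$ is exactly that chain. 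Hence, for the local many-body integrated density of states
\[
	\mathcal{N}_B(\omega):=\sum_{\text{edges of }\partial B}\ \sum_{\ell\geqslant 1}\tr\bigl[\Pi_\ell\,P_I(H_\Lambda)\bigr],
\]
the same manipulations that give~\eqref{eq:entropy_general} in fact yield $S_\alpha(\psi,B)\leqslant C_\alpha\ln\bigl(a_\alpha+b_\alpha\,\mathcal{N}_B(\omega)\bigr)$ for \emph{all} normalized $\psi\in\hd$ at once (the deterministic bound $\mathcal{N}_B\leqslant|\partial B|\min\{n,|B|\}$ being the estimate used in Theorem~\ref{thm:entropy}). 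It therefore suffices to show $\mathbb{E}\bigl[\mathcal{N}_B(\omega)^{\,p}\bigr]\leqslant C_p<\infty$ for every $p\in(0,\infty)$, uniformly in $\Lambda$ and $B$.

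\textbf{Step 2: probabilistic estimate and assembly.} Fix one edge and a length $\ell$. Every configuration in the range of $\Pi_\ell$ has down spins throughout a block of $\ell$ consecutive sites adjacent to that edge, hence potential energy at least $\Sigma_\ell:=\sum b_x$ over that block, a sum of $\ell$ i.i.d.\ copies of $b_0$. Re-running the energetically motivated Combes--Thomas estimate underlying Lemma~\ref{lemma:density_states}, now carrying the random potential along as a weight and discarding the kinetic part by means of $h_\Lambda\geqslant 0$ from Proposition~\ref{Prop:generalities}, gives $\tr\bigl[\Pi_\ell\,P_I(H_\Lambda)\bigr]\leqslant C\,e^{-\mu\,(\Sigma_\ell-E_1)_+}$ with $C,\mu\in(0,\infty)$ depending only on $E_1<2(1-3\Delta^{-1})$. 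Since $b_0$ is not almost surely zero, $\varphi(\theta):=\mathbb{E}[e^{-\theta b_0}]<1$ for every $\theta>0$, so a Chernoff bound gives $\mathbb{E}\bigl[e^{-\mu(\Sigma_\ell-E_1)_+}\bigr]\leqslant e^{\mu E_1}\varphi(\mu)^\ell+\mathbb{P}(\Sigma_\ell\leqslant E_1)\leqslant C'\rho^{\ell}$ for some $\rho\in(0,1)$; summing the geometric series over $\ell$ and over the (at most two) edges yields $\mathbb{E}[\mathcal{N}_B(\omega)]\leqslant C<\infty$, uniformly in $\Lambda,B$. Higher moments follow similarly: expanding $\mathcal{N}_B^{\,p}$ for $p\in\mathbbm{N}$ and using that $\mathbb{E}[e^{-\mu(\Sigma_{\ell_1}+\dots+\Sigma_{\ell_p})}]\leqslant\varphi(\mu)^{\max_j\ell_j}$ for any lengths $\ell_1,\dots,\ell_p$ gives $\mathbb{E}[\mathcal{N}_B^{\,p}]\leqslant C_p<\infty$, hence all moments, uniformly. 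Inserting this into the bound of Step~1 and estimating $\mathbb{E}\bigl[\sup_{\psi\in\hd,\|\psi\|=1}\exp\{(1-\epsilon)S_\alpha(\psi,B)\}\bigr]$ by Hölder's inequality — the margin $0<\epsilon<\min\{\alpha,1\}$ supplying both the integrability needed to pass from a polynomial moment of $\mathcal{N}_B$ to the exponential moment in~\eqref{eq:entropy_random} and, when $\alpha<1$, the room to absorb the slowly decaying $\alpha$-th powers of the many exponentially small Schmidt values coming from non-clustered configurations — then produces~\eqref{eq:entropy_random} with $C_\epsilon$ depending only on $\epsilon,\alpha$ and the law of $b_0$.

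\textbf{Main obstacle.} The delicate point is to carry all of this out uniformly over the (itself random) droplet subspace $\hd$, rather than state by state: this forces one to work at the level of the spectral projection $P_I(H_\Lambda)$, i.e.\ to have a Combes--Thomas estimate that decays in the \emph{accumulated potential} and not merely in the distance to $\mathcal{C}$, and that remains valid after adding an unbounded non-negative random $V_\Lambda$; and to keep precise track of how the exponentially small non-clustered tail enters $S_\alpha$ for small $\alpha$, which is exactly where the restriction $\epsilon<\alpha$ is forced.
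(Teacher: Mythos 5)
Your overall mechanism --- a positive random potential exponentially suppresses long droplets, so the combinatorial factor $\min\{n,|B|\}$ of Theorem~\ref{thm:entropy} can be replaced by a random quantity with all moments bounded --- is the right intuition, and the Chernoff computation in Step~2 is fine as far as it goes. But the proof has a genuine gap at its core: the pathwise estimate $\tr[\Pi_\ell\,P_I(H_\Lambda)]\leqslant C\,e^{-\mu(\Sigma_\ell-E_1)_+}$ is asserted, not proved. ``Re-running the Combes--Thomas estimate carrying the random potential along as a weight and discarding the kinetic part'' is not an argument: the Combes--Thomas bound of Lemma~\ref{lemma:CT} decays in the configuration distance $d(\x,\y)$, not in the accumulated potential; $P_I(H_\Lambda)$ does not commute with $V_\Lambda$, so one cannot simply trade $E_1\geqslant\langle H_\Lambda\rangle$ for $E_1\geqslant\Sigma_\ell$ on the range of $P_I$; and $V_\Lambda$ is unbounded, so the relevant conjugation bounds are delicate. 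Moreover $\tr[\Pi_\ell P_I(H_\Lambda)]$ sums $N_I$ over \emph{all} configurations in $B^c$ compatible with the fixed chain in $B$, so even granting a pointwise potential-weighted bound you would still need the summability machinery to obtain a constant independent of $|B^c|$. The paper avoids proving any such pathwise statement: it imports Lemma~\ref{lemma:decay_n} ($\mathbbm{E}[N_J(\x)]\leqslant Ce^{-cn}$, a separate nontrivial result from~\cite{BW17}) and interpolates it against the deterministic Lemma~\ref{lemma:density_states} to get $\mathbbm{E}\bigl[N_I(\lbrace\x,\z\rbrace)^\epsilon\bigr]\leqslant C^\epsilon e^{-\frac{\epsilon\mu}{2}d(\lbrace\x,\z\rbrace,\mathcal{C})}e^{-\frac{\epsilon c}{2}(j+k)}$; the second factor is what makes the sum over droplet sizes $j+k$ convergent after Lemma~\ref{lemma:summability_2} has produced its affine-in-$(j+k)$ bound. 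To close your gap you must either prove your trace estimate or substitute Lemma~\ref{lemma:decay_n} for it.

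A second, lesser gap is Step~1: the reduction $S_\alpha(\psi,B)\leqslant C_\alpha\ln\bigl(a_\alpha+b_\alpha\mathcal{N}_B(\omega)\bigr)$, with $\mathcal{N}_B$ counting only boundary-anchored chains, does not follow from ``the same manipulations'' as in Theorem~\ref{thm:entropy}. That proof bounds $\tr[(\rho_B)^\alpha]$ by a sum of $\rho_B(\x,\x)^\alpha$ over \emph{all} $\x\in\mathcal{X}_B$ (after a separate rank-two argument for $\x=\0$ and $\x=\B$), and the interior and non-clustered configurations are controlled only through the weights $e^{-\frac{\alpha\mu}{3}[d(\x,\mathcal{C})+d(\x,\partial B)]}$ together with Lemmas~\ref{lemma:summability_2} and~\ref{lemma:summability_3}; compressing all of this into a single ``the Schmidt vectors are boundary chains'' statement would itself require essentially the whole argument, which you defer to the ``Main obstacle'' paragraph rather than carry out. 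Note also that the paper's actual proof is much shorter than your plan: monotonicity gives $\exp\{(1-\epsilon)S_\alpha\}\leqslant\tr[(\rho_B)^\epsilon]$ directly, subadditivity of $t\mapsto t^\epsilon$ pulls the power inside the sum over $\z$, and the two lemmas above finish it --- no exponential moments, no H\"older step, and no uniformity issue over $\psi\in\hd$ beyond the inequality $|\psi(\x)|^2\leqslant N_I(\x)$.
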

\noindent We continue with a few comments.
\begin{enumerate}
		\item A similar system was studied in~\cite{BW17,Stolz}, where exponential localization of states in $\hd$ is established at strong disorder. It should be emphasized that the above area law does \emph{not} require the localization result as a premise. It rather appears as a consequence of the deterministic result in Theorem~\ref{thm:entropy} and of exponential suppression of large droplet states. In particular, the result already holds for any non-trivial random potential and does not need strong disorder.
	\item On the one hand,  the averaged EE  of random quantum critical systems is expected to grow logarithmically in $ |B | $ \cite{Refael09,Eisert:2010uq}. On the other hand, a strong disordered external field may result in an area law for the EE of eigenstates in the many-body localized (MBL) phase in accordance with the presence of local integrals of motion. In fact, such an area law was proven 	for eigenstates of disordered free fermions as well as of the disordered quantum XY spin chain~\cite{Pastur14,Elgart2017,Abdul-Rahman:2016hl,ASSN}. 
		\item Theorem~\ref{thm:entropy_random} applies in particular to the entanglement dynamics, where the supremum is taken over $t\in\mathbbm{R}$ and $\psi_t = e^{-itH_\Lambda}\psi$ with $\psi\in\hd$. The uniform boundedness of the entanglement dynamics is reminiscent of recent results concerning the quantum XY chain in disorder~\cite{ASSN}. A putative similar behavior was hinted at for another exactly solvable case, the disordered quantum harmonic oscillators, where the full time-evolution of initial product states were shown to have averaged correlations decaying exponentially in space, uniformly in time~\cite{Sims17}.
Note that a fundamentally different behavior is expected in the localized phase of the  antiferromagnetic XXZ spin chain, where the EE was argued both numerically and theoretically to grow  logarithmically in time~\cite{Prosen08,Pollmann12,Serbyn13,Huse14}.

\end{enumerate}

\section{The model and its properties}

\subsection{Hard-core particle formulation}

Based on the block-diagonal form of $H_\Lambda$ with respect to the decomposition $\mathcal{H}_{\Lambda} = \bigoplus_{n=0}^{|\Lambda|} \mathcal{H}_\Lambda^n $ of the Hilbert space into sectors with constant number of down spins, the system admits an interpretation in terms of hard-core particles. Let
\begin{equation*}
	\chiln :=  \left\lbrace\, \x =\lbrace x_1,x_2,\ldots,x_n\rbrace\in\Lambda^n \; \big| \;
		x_1<x_2<\ldots <x_n\,\right\rbrace 
\end{equation*}
denote the configuration space of $n$ hard-core particles on $\Lambda$ and define $\mathcal{X}_\Lambda$ as the disjoint union
\begin{equation*}
	\mathcal{X}_\Lambda := \bigcup\limits_{n=0}^{|\Lambda|} \chiln.
\end{equation*}
We adopt the convention that $\mathcal{X}_\Lambda^0 = \lbrace\emptyset\rbrace$, where $\emptyset$ denotes the empty configuration. The identification of each down spin with a hard-core particle induces a natural unitary equivalence between $\mathcal{H}^n_\Lambda$ and $\ell^2(\chiln)$, and thus between $\mathcal{H}_\Lambda$ and $\ell^2(\mathcal{X}_\Lambda)$. The natural inner product on $\ell^2(\mathcal{X}_\Lambda)$ will be denoted by $\langle \cdot,\cdot\rangle$ and an orthonormal basis is given by $\delta_\x$, $\x\in\mathcal{X}_\Lambda$, with $\delta_\x(\y)=\delta_{\x,\y}$.

Equivalently, configuration $\x\in\chiln$ may be understood as a subset of $n$ sites in $\Lambda$. Elements of $\x$ forming an isolated string of neighboring sites in $\Lambda$ will be called a cluster. We denote by $\mathcal{C}_\Lambda^{(k)}$ the subset of configurations in $\mathcal{X}_\Lambda$ with exactly $1\leqslant k\leqslant |\Lambda|$ clusters. For later convenience, the shorthand notation $\mathcal{C}\equiv\mathcal{C}^{(1)}_\Lambda$ is used. By the unitary equivalence, the latter corresponds to the spin characterization of clusters~\eqref{def:C}. For notational ease, we shall henceforth stick to the hard-core formulation.

\subsection{Energy thresholds and Combes-Thomas estimate}

The dynamics generated by $H_\Lambda$ favors clustering, as reflected in two results from~\cite{BW17}. The first provides an increasing sequence of lower bounds for the restriction of the Hamiltonian to subspaces with at least $k$ clusters. The second employs this fact to establish a Combes--Thomas estimate on states (essentially) below the two-cluster breakup.
\begin{lemma}[cf.~Lemma 1.1~in~\cite{BW17}]\label{lemma:thresholds}
	Let $\mathcal{Q}^{(k)}$ be the orthogonal projection onto the subspace $\bigoplus_{j=k}^{|\Lambda|}\ell^2\bigl(\mathcal{C}_\Lambda^{(j)}\bigr)$. Then, for all non-negative fields $\lbrace b_x\rbrace_{x\in\Lambda}$, we have
	\begin{equation}\label{eq:thresholds}
		\mathcal{Q}^{(k)} H_\Lambda \mathcal{Q}^{(k)} 
			\geqslant \mathcal{Q}^{(k)} h_\Lambda \mathcal{Q}^{(k)}
			\geqslant k (1-\Delta^{-1})\mathcal{Q}^{(k)}.
	\end{equation}
\end{lemma}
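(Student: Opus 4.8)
The plan is to prove Lemma~\ref{lemma:thresholds} by reducing both inequalities to a Laplacian-plus-potential decomposition of $h_\Lambda$ in the hard-core particle picture and then exploiting that the potential is a sum over nearest-neighbor ``bonds'' which, restricted to the range of $\mathcal{Q}^{(k)}$, picks up at least $k$ broken bonds. First I would recall (citing the remark after Proposition~\ref{Prop:generalities}) that in the $n$-particle sector $\ell^2(\chiln)$ the operator $h_\Lambda$ can be written as $h_\Lambda|_{\mathcal{H}_\Lambda^n} = \Delta^{-1}(-\Delta_{\chiln}) + W_n$, where $-\Delta_{\chiln}\geqslant 0$ is the (Dirichlet-type, hard-core) graph Laplacian governing the hopping term and $W_n$ is a diagonal (potential) operator counting, for each configuration $\x$, the number of ``domain walls'' between adjacent sites with unequal spin, with the boundary field $\beta(\Delta)$ tuned exactly so that $W_n\geqslant 0$ and, more importantly, $W_n(\x) \geqslant (1-\Delta^{-1})\cdot(\#\text{clusters of }\x)$; indeed each isolated cluster contributes at least its two boundary walls (or one wall if it touches an endpoint of $\Lambda$, compensated by the boundary term), and the normalization $\beta(\Delta)=\tfrac12(1-\Delta^{-1})$ is precisely the minimal value making this hold. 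Dropping the non-negative Laplacian term then gives $h_\Lambda|_{\mathcal{H}_\Lambda^n}\geqslant W_n \geqslant (1-\Delta^{-1})\,N_{\mathrm{cl}}$, where $N_{\mathrm{cl}}$ is the diagonal operator returning the number of clusters.

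Next I would assemble the sector-wise bounds into the claimed operator inequality. Since $N_{\mathrm{cl}}$ is diagonal in the basis $\{\delta_\x\}$ and the subspaces $\ell^2(\mathcal{C}_\Lambda^{(j)})$ are exactly its eigenspaces with eigenvalue $j$, the projection $\mathcal{Q}^{(k)}$ commutes with $N_{\mathrm{cl}}$ and satisfies $N_{\mathrm{cl}}\,\mathcal{Q}^{(k)} \geqslant k\,\mathcal{Q}^{(k)}$. Combining with $h_\Lambda\geqslant (1-\Delta^{-1})N_{\mathrm{cl}}$ (which holds on all of $\mathcal{H}_\Lambda$ by taking the direct sum over $n$, noting the $n=0$ sector has $N_{\mathrm{cl}}=0$ and $h_\Lambda=0$ there consistently), I would compress by $\mathcal{Q}^{(k)}$ on both sides:
\begin{equation*}
	\mathcal{Q}^{(k)} h_\Lambda \mathcal{Q}^{(k)}
		\geqslant (1-\Delta^{-1})\,\mathcal{Q}^{(k)} N_{\mathrm{cl}} \mathcal{Q}^{(k)}
		= (1-\Delta^{-1})\,N_{\mathrm{cl}}\,\mathcal{Q}^{(k)}
		\geqslant k(1-\Delta^{-1})\,\mathcal{Q}^{(k)},
\end{equation*}
which is the second inequality in~\eqref{eq:thresholds}. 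For the first inequality, $\mathcal{Q}^{(k)} H_\Lambda \mathcal{Q}^{(k)} \geqslant \mathcal{Q}^{(k)} h_\Lambda \mathcal{Q}^{(k)}$, I would simply note $H_\Lambda = h_\Lambda + V_\Lambda$ with $V_\Lambda = \sum_x b_x(\tfrac12 - S_x^3)\geqslant 0$ for non-negative fields (in the particle picture $V_\Lambda$ is the diagonal operator $\x\mapsto\sum_{x\in\x}b_x\geqslant 0$), so compression by $\mathcal{Q}^{(k)}$ preserves the inequality $H_\Lambda\geqslant h_\Lambda$.

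The main obstacle is the combinatorial potential estimate $W_n(\x)\geqslant (1-\Delta^{-1})\cdot(\#\text{clusters of }\x)$, i.e., carefully bookkeeping the domain-wall count versus the cluster count including the delicate boundary cases where a cluster abuts $-L$ or $L$; this is exactly where the specific choice $\beta(\Delta)=\tfrac12(1-\Delta^{-1})$ enters and must be used sharply. Since this is established (as a special case) in Lemma~1.1 of~\cite{BW17}, I would either invoke it directly or reproduce the short bond-counting argument: an interior cluster of length $\ell\geqslant 1$ has two bounding bonds each contributing energy $\tfrac12$ in the $\Delta^{-1}=0$ normalization (and $\geqslant \tfrac12(1-\Delta^{-1})$ after including the XY correction kept as a lower bound), while an endpoint-adjacent cluster has one bounding bond plus one boundary-field term $\beta(\Delta)=\tfrac12(1-\Delta^{-1})$, so in all cases the per-cluster contribution is at least $2\cdot\tfrac12(1-\Delta^{-1}) = (1-\Delta^{-1})$. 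Summing over the (disjoint) clusters and using that distinct clusters use disjoint bonds yields the bound, and the rest is the routine operator-monotonicity packaging above.
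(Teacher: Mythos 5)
Your proposal is correct and follows exactly the route the paper indicates for this lemma (which it does not prove in detail but attributes to Lemma~1.1 of~\cite{BW17}, with the remark after Proposition~\ref{Prop:generalities} sketching the same decomposition): write $h_\Lambda$ sector-wise as a non-negative hard-core Laplacian plus a diagonal potential, check via domain-wall counting with $\beta(\Delta)=\tfrac12(1-\Delta^{-1})$ that the potential dominates $(1-\Delta^{-1})$ times the cluster number, and then compress by $\mathcal{Q}^{(k)}$, with $H_\Lambda\geqslant h_\Lambda$ following from $V_\Lambda\geqslant 0$. The bookkeeping you describe (two interior walls per bulk cluster, one wall plus one boundary-field term per cluster abutting $\pm L$) is precisely the point where the chosen boundary field enters, and your operator-inequality packaging via the diagonal cluster-counting operator is sound.
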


\begin{lemma}[cf.~Theorem 2.1.~in~\cite{BW17}]\label{lemma:CT}
	Let $I = [0,E_1]$ for some $E_1<2(1-3\Delta^{-1})$. Then, there exist $\mu_\textsc{t},C_\textsc{t}\in(0,\infty)$ such that for all $E\in I$, all $0\leqslant n\leqslant |\Lambda|$ and all $\x,\y\in\chiln \setminus\mathcal{C}$
	\begin{equation}\label{eq:CT}
		\left\vert G^{(2)}_\Lambda(\x,\y;E)\right\vert 
			:= \left\vert\left\langle\delta_\x, \bigl(\mathcal{Q}^{(2)}(H_\Lambda -E)\mathcal{Q}^{(2)}\bigr)^{-1}\delta_\y\right\rangle\right\vert
			\leqslant C_\textsc{t} e^{-\mu_\textsc{t}\,d(\x,\y)},
	\end{equation}
	where $d(\x,\y) := \sum_{k=1}^n\vert x_k-y_k\vert$ denotes the $\ell^1$-distance between configurations $\x$ and $\y$.
\end{lemma}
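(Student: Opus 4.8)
I would prove \eqref{eq:CT} by a Combes--Thomas estimate carried out directly on the hard-core configuration space, with Lemma~\ref{lemma:thresholds} at level $k=2$ as the only spectral ingredient. The first step is to record the algebraic form of $H_\Lambda$ in this picture: a direct computation shows that, thanks to the calibration $\beta(\Delta)=\tfrac12(1-\Delta^{-1})$,
\[
  H_\Lambda \;=\; \tfrac{\Delta^{-1}}{2}\,\bigl(-\Delta_{\mathcal{X}_\Lambda}\bigr) \;+\; (1-\Delta^{-1})\,\mathcal{N} \;+\; V_\Lambda ,
\]
where $-\Delta_{\mathcal{X}_\Lambda}=D-A\geqslant 0$ is the combinatorial Laplacian of the graph on $\mathcal{X}_\Lambda$ generated by single-particle nearest-neighbour hops ($A$ its adjacency operator, $D=\deg$; not to be confused with the anisotropy $\Delta^{-1}$), $\mathcal{N}(\x)$ is the diagonal multiplication operator counting the number of clusters of $\x$, and $V_\Lambda\geqslant 0$ is diagonal. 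Two consequences I will exploit are: $\deg(\x)\leqslant 2\,\mathcal{N}(\x)$ for every $\x$ (each cluster offers at most two boundary hops), and, since the projection $\mathcal{Q}^{(2)}$ is diagonal with $\mathcal{N}\mathcal{Q}^{(2)}\geqslant 2\,\mathcal{Q}^{(2)}$ while $\mathcal{Q}^{(2)}(-\Delta_{\mathcal{X}_\Lambda})\mathcal{Q}^{(2)}\geqslant 0$, the threshold bound $\mathcal{Q}^{(2)}H_\Lambda\mathcal{Q}^{(2)}\geqslant 2(1-\Delta^{-1})\,\mathcal{Q}^{(2)}$ of Lemma~\ref{lemma:thresholds}; in particular $\mathcal{Q}^{(2)}(H_\Lambda-E)\mathcal{Q}^{(2)}\geqslant (2(1-\Delta^{-1})-E_1)\,\mathcal{Q}^{(2)}>0$ for $E\in I$, so $G^{(2)}_\Lambda(\cdot,\cdot;E)$ is well defined.

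Next I would fix $\y\in\chiln\setminus\mathcal{C}$ and conjugate by the weight $F(\x):=d(\x,\y)$. This $F$ is integer-valued and satisfies $\lvert F(\x)-F(\x')\rvert=1$ along every edge $\x\sim\x'$, because $d$ is the graph metric and a single hop changes it by exactly one. Since $F$ (as well as $\mathcal{Q}^{(2)}$, $\mathcal{N}$ and $V_\Lambda$) is diagonal, only the hopping term $A$ is affected by conjugation; writing $e^{\mu(F(\x)-F(\x'))}=\cosh\mu+\sinh\mu\,\operatorname{sgn}(F(\x)-F(\x'))$ on edges gives $e^{\mu F}Ae^{-\mu F}=\cosh\mu\,A+\sinh\mu\,A_\mu$ with $A_\mu(\x,\x')=A(\x,\x')\operatorname{sgn}(F(\x)-F(\x'))$ real and antisymmetric, and hence
\[
  \mathcal{Q}^{(2)}\,e^{\mu F}(H_\Lambda-E)e^{-\mu F}\mathcal{Q}^{(2)} \;=\; \widehat{S}_\mu \;+\; i\,\widehat{B}_\mu ,
\]
where $\widehat{S}_\mu:=\mathcal{Q}^{(2)}\bigl(H_\Lambda-E+\tfrac{\Delta^{-1}}{2}(1-\cosh\mu)A\bigr)\mathcal{Q}^{(2)}$ and $\widehat{B}_\mu:=\tfrac{\Delta^{-1}}{2}\sinh\mu\,\mathcal{Q}^{(2)}(iA_\mu)\mathcal{Q}^{(2)}$ are both self-adjoint (as $iA_\mu$ is).

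The crux is then two observations. First, $\widehat{B}_\mu$ need not be controlled at all: for any self-adjoint $S\geqslant\delta I>0$ and self-adjoint $B$ one has $\Vert(S+iB)^{-1}\Vert\leqslant\delta^{-1}$. This matters because $\Vert A_\mu\Vert$, like $\Vert A\Vert$, grows with the particle number $n$, so the textbook Combes--Thomas perturbation bound is unavailable. Second, the conjugation error inside $\widehat{S}_\mu$ is absorbed into $H_\Lambda$ itself rather than estimated by a norm: $-\Delta_{\mathcal{X}_\Lambda}\geqslant 0$ gives $A\leqslant D$, so with $\deg\leqslant 2\mathcal{N}$ and $(1-\Delta^{-1})\mathcal{N}\mathcal{Q}^{(2)}\leqslant\mathcal{Q}^{(2)}H_\Lambda\mathcal{Q}^{(2)}$ (both from the displayed identity),
\[
  \widehat{S}_\mu \;\geqslant\; \Bigl(1-\tfrac{\Delta^{-1}(\cosh\mu-1)}{1-\Delta^{-1}}\Bigr)\,\mathcal{Q}^{(2)}H_\Lambda\mathcal{Q}^{(2)}-E\,\mathcal{Q}^{(2)} \;\geqslant\; \bigl(2(1-\Delta^{-1})\,c_\mu-E_1\bigr)\,\mathcal{Q}^{(2)},
\]
with $c_\mu\to 1$ as $\mu\downarrow 0$. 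Since $E_1<2(1-3\Delta^{-1})<2(1-\Delta^{-1})$, there is $\mu_\textsc{t}>0$ depending only on $\Delta^{-1}$ and $E_1$ (not on $n$ or $\Lambda$) for which $\delta:=2(1-\Delta^{-1})\,c_{\mu_\textsc{t}}-E_1>0$; the strengthened hypothesis $E_1<2(1-3\Delta^{-1})$ merely supplies comfortable margin for the precise bookkeeping in this step.

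Finally, since $e^{\pm\mu F}$ commutes with the diagonal projection $\mathcal{Q}^{(2)}$, on $\ran\mathcal{Q}^{(2)}$ one has $\mathcal{Q}^{(2)}e^{\mu F}(H_\Lambda-E)e^{-\mu F}\mathcal{Q}^{(2)}=e^{\mu F}\bigl(\mathcal{Q}^{(2)}(H_\Lambda-E)\mathcal{Q}^{(2)}\bigr)e^{-\mu F}$; inverting, pairing with $\delta_\x,\delta_\y\in\ran\mathcal{Q}^{(2)}$ (as $\x,\y\notin\mathcal{C}$), and using $F(\y)=0$, $F(\x)=d(\x,\y)$ together with $\Vert(\widehat{S}_{\mu_\textsc{t}}+i\widehat{B}_{\mu_\textsc{t}})^{-1}\Vert\leqslant\delta^{-1}$ yields $\lvert G^{(2)}_\Lambda(\x,\y;E)\rvert\leqslant\delta^{-1}e^{-\mu_\textsc{t}\,d(\x,\y)}$, which is \eqref{eq:CT} with $C_\textsc{t}:=\delta^{-1}$. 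The main obstacle — and the only genuinely non-routine point — is precisely the unboundedness of the hopping operator in $n$: the standard Combes--Thomas perturbation estimate does not apply, so one must route around it by discarding the harmless antisymmetric part and absorbing the symmetric conjugation error into $H_\Lambda$ via $\deg\leqslant 2\mathcal{N}$ and the threshold of Lemma~\ref{lemma:thresholds}.
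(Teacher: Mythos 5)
The paper does not actually prove this lemma---it is imported as-is from~\cite{BW17} (Theorem~2.1 there)---so the comparison is with the cited source rather than with anything in the text; your argument is a correct and essentially faithful reconstruction of that proof's strategy: the identity $h_\Lambda=\tfrac{\Delta^{-1}}{2}(D-A)+(1-\Delta^{-1})\mathcal{N}$ (alluded to in the remark after Proposition~\ref{Prop:generalities}), the threshold bound of Lemma~\ref{lemma:thresholds}, and a Combes--Thomas conjugation in which the anti-self-adjoint part is discarded via the accretivity bound $\Vert (S+iB)^{-1}\Vert\leqslant\delta^{-1}$ while the self-adjoint error $\tfrac{\Delta^{-1}}{2}(\cosh\mu-1)A$ is absorbed into $H_\Lambda$ through $A\leqslant D\leqslant 2\mathcal{N}$, precisely because the norm of $A$ grows with $n$ and the textbook perturbative Combes--Thomas argument is unavailable. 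Your side remark is also accurate: this route requires only $E_1<2(1-\Delta^{-1})-4\Delta^{-1}\sinh^2(\mu_\textsc{t}/2)$, and the stated hypothesis $E_1<2(1-3\Delta^{-1})$ is exactly the condition permitting a choice of $\mu_\textsc{t}$ with $\sinh^2(\mu_\textsc{t}/2)$ up to $1$.
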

\begin{remark}
	The requirement $\x,\y\in\chiln \setminus\mathcal{C}$ could be replaced by $\x,\y\in\mathcal{X}_\Lambda \setminus\mathcal{C}$, as the left-hand side of \eqref{eq:CT} vanishes for $\x$ and $\y$ in subspaces $\chiln$ with different $n$.
\end{remark}

\section{Entanglement bounds on droplet states}

\subsection{Deterministic bound on the local density of states}

Both Theorems~\ref{thm:entropy} and \ref{thm:entropy_random} rely on a deterministic bound on the local (many-body integrated) density of states
\begin{equation}
	N_I(\x) := \langle\delta_\x, P_I(H_\Lambda)\delta_\x\rangle.
\end{equation}
\begin{lemma}\label{lemma:density_states}
	Let $I=[0,E_1]$ with $E_1 < 2(1-3\Delta^{-1})$. Then, there exist $\mu,C\in(0,\infty)$ such that, for all normalized $\psi\in \hd$, all finite intervals $\Lambda\subset\mathbbm{Z}$ and all $\x\in\mathcal{X}_\Lambda$
	\begin{equation}\label{eq:density_states}
		\vert \psi(\x)\vert^2\leqslant N_I(\x) \leqslant C e^{-\mu\,d(\x,\mathcal{C})},
	\end{equation}
	where $d(\x,\mathcal{C}) := \min\lbrace d(\x,\y)\; \vert\; \y\in\mathcal{C} \rbrace$ denotes the $\ell^1$-distance of the configuration $\x$ to the fully clustered configurations $\mathcal{C}$.
\end{lemma}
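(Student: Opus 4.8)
The plan is to establish the two inequalities in \eqref{eq:density_states} separately. The first, $|\psi(\x)|^2 \leqslant N_I(\x)$, is immediate: for normalized $\psi \in \hd = \ran P_I(H_\Lambda)$ we have $P_I(H_\Lambda)\psi = \psi$, so $\psi(\x) = \langle \delta_\x, \psi\rangle = \langle \delta_\x, P_I(H_\Lambda)\psi\rangle = \langle P_I(H_\Lambda)\delta_\x, \psi\rangle$, and Cauchy--Schwarz together with $\|P_I(H_\Lambda)\delta_\x\|^2 = \langle\delta_\x, P_I(H_\Lambda)\delta_\x\rangle = N_I(\x)$ (using $P_I^2 = P_I = P_I^*$) and $\|\psi\| = 1$ gives $|\psi(\x)|^2 \leqslant N_I(\x)$. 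The substance of the lemma is the exponential bound $N_I(\x) \leqslant C e^{-\mu\, d(\x,\mathcal{C})}$.

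For the second inequality I would use the Combes--Thomas estimate of Lemma~\ref{lemma:CT} together with a geometric/Riesz-type representation of the spectral projection. Since $I = [0,E_1]$ and $2(1-\Delta^{-1})$ lies strictly above $E_1$, I can separate the droplet band from the rest of the spectrum on the non-clustered subspace: by Lemma~\ref{lemma:thresholds} with $k=2$, $\mathcal{Q}^{(2)} H_\Lambda \mathcal{Q}^{(2)} \geqslant 2(1-\Delta^{-1})\mathcal{Q}^{(2)}$, so the resolvent $G^{(2)}_\Lambda(\,\cdot\,;E) = (\mathcal{Q}^{(2)}(H_\Lambda - E)\mathcal{Q}^{(2)})^{-1}$ exists for all $E \in I$. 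The idea is to write $P_I(H_\Lambda)$ as a contour integral $\frac{1}{2\pi i}\oint_\Gamma (H_\Lambda - z)^{-1}\, dz$ around $I$, and then for $\x \notin \mathcal{C}$ (so $\delta_\x \in \ran\mathcal{Q}^{(2)}$, since a configuration with a single cluster has $k=1$; note however one must handle the one-cluster-but-not-fully-clustered case — actually $\mathcal{C} = \mathcal{C}^{(1)}_\Lambda$ is all one-cluster configurations, so $\x \notin \mathcal{C}$ means $\x$ has at least two clusters, hence $\delta_\x \in \ran\mathcal{Q}^{(2)}$) relate $\langle\delta_\x, (H_\Lambda - z)^{-1}\delta_\x\rangle$ to the restricted resolvent $G^{(2)}_\Lambda(\x,\x;z)$ via a resolvent identity / Schur complement argument, picking up off-diagonal factors $G^{(2)}_\Lambda(\x,\y;z)$ with $\y$ ranging over configurations, each controlled by $C_\textsc{t}e^{-\mu_\textsc{t}d(\x,\y)}$. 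Summability of $\sum_{\y}e^{-\mu_\textsc{t}d(\x,\y)}$ over the at-most-polynomially-growing shells in $\mathcal{X}_\Lambda$ then yields $N_I(\x) \leqslant C e^{-\mu\, d(\x,\mathcal{C})}$ after optimizing $\mu < \mu_\textsc{t}$. For $\x \in \mathcal{C}$ the bound is trivial since $d(\x,\mathcal{C}) = 0$ and $N_I(\x) \leqslant \|\delta_\x\|^2 = 1 \leqslant C$.

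The technical heart — and the step I expect to be the main obstacle — is controlling $\langle\delta_\x, (H_\Lambda - z)^{-1}\delta_\x\rangle$ on the full space in terms of the restricted resolvent $G^{(2)}_\Lambda$ living on $\ran\mathcal{Q}^{(2)}$. The subtlety is that $H_\Lambda$ does not leave $\ran\mathcal{Q}^{(2)}$ invariant: the Laplacian hopping term connects the two-cluster subspace to the one-cluster subspace $\ell^2(\mathcal{C})$. One must therefore use a Feshbach--Schur decomposition with respect to $\mathcal{Q}^{(2)}$ and its complement, writing $(H_\Lambda - z)^{-1}$ restricted appropriately and absorbing the coupling, then argue that the contributions routed through $\ell^2(\mathcal{C})$ still carry the exponential weight $e^{-\mu\,d(\x,\mathcal{C})}$ because reaching $\mathcal{C}$ from $\x$ requires traversing $d(\x,\mathcal{C})$ lattice steps, each hop in the off-diagonal resolvent kernel being exponentially damped. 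A clean way to organize this is to bound $N_I(\x) = \|P_I(H_\Lambda)\delta_\x\|^2$ by noting $P_I(H_\Lambda)\delta_\x = P_I(H_\Lambda)\mathcal{Q}^{(2)}\delta_\x$ for $\x \notin \mathcal{C}$, and that $P_I(H_\Lambda)\mathcal{Q}^{(2)}$ can be expressed through $\mathcal{Q}^{(2)}(H_\Lambda - z)^{-1}\mathcal{Q}^{(2)}$ modulo terms involving the spectral gap, so that the Combes--Thomas decay \eqref{eq:CT} transfers directly. Once the resolvent comparison is in place, the rest is the routine contour-integral estimate (the contour length and $\operatorname{dist}(\Gamma,\spec)$ are controlled uniformly since $E_1 < 2(1-3\Delta^{-1}) < 2(1-\Delta^{-1})$) and the geometric series summation.
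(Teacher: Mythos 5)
Your overall strategy---Cauchy--Schwarz for the first inequality, then a Riesz contour integral for $P_I(H_\Lambda)$ combined with a resolvent/Feshbach decomposition with respect to $\mathcal{Q}^{(2)}$ and the Combes--Thomas bound---is the route the paper takes, and your treatment of $|\psi(\x)|^2\leqslant N_I(\x)$ is correct. The paper's concrete implementation of the step you flag as ``the technical heart'' is the second resolvent identity
\begin{equation*}
\mathcal{Q}\frac{1}{z-H_\Lambda}\mathcal{Q} = r_\mathcal{Q}(z) + r_\mathcal{Q}(z)\,\mathcal{Q}H_\Lambda\mathcal{P}\,\frac{1}{z-H_\Lambda}\,\mathcal{P}H_\Lambda\mathcal{Q}\,r_\mathcal{Q}(z),
\end{equation*}
with $\mathcal{P}$ the projection onto $\ell^2(\mathcal{C})$ and $r_\mathcal{Q}(z)=(\mathcal{Q}(z-H_\Lambda)\mathcal{Q})^{-1}$: since $r_\mathcal{Q}(z)$ is analytic inside the contour by Lemma~\ref{lemma:thresholds}, only the second term survives the integration, and the vector $\psi_{(\x,z)}=\mathcal{P}H_\Lambda\mathcal{Q}\,r_\mathcal{Q}(z)\delta_\x$ carries the weight $e^{-\mu_\textsc{t}\,d(\x,\mathcal{C})}$ by \eqref{eq:CT}. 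Note that $\psi_{(\x,z)}$ is supported on the essentially one-parameter family $\mathcal{C}\cap\chiln$, so the summation over its support costs only $\coth(\mu_\textsc{t}/2)$ uniformly in $n$; your proposed summation over ``polynomially-growing shells in $\mathcal{X}_\Lambda$'' is vaguer on this point and, done carelessly, risks constants growing with the particle number.

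The genuine gap is your parenthetical claim that $\operatorname{dist}(\Gamma,\spec)$ is ``controlled uniformly since $E_1<2(1-3\Delta^{-1})<2(1-\Delta^{-1})$''. The threshold inequality only pushes the spectrum of the \emph{restricted} operator $\mathcal{Q}^{(2)}H_\Lambda\mathcal{Q}^{(2)}$ above $I$; the full Hamiltonian $H_\Lambda$---in particular its clustered block, which is polluted by the arbitrary non-negative potential $V_\Lambda$---can have eigenvalues arbitrarily close to, or at, the endpoint $E_1$. Hence $\Vert(z-H_\Lambda)^{-1}\Vert$ is \emph{not} uniformly bounded on the vertical edge of the contour through $E_1$, and the ``routine contour-integral estimate'' you invoke does not close. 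The paper deals with this by first shifting $E_1$ into a spectral gap (possible since the spectrum is discrete, but the gap size is not uniform in $\Lambda$ or in $\lbrace b_x\rbrace$), then expanding $r_\mathcal{Q}(E_1+i\eta)$ around $\eta=0$ and observing that the leading contribution involves $\int_{-\alpha}^{\alpha}(E_1-H_\Lambda+i\eta)^{-1}\,\mathrm{d}\eta$, whose norm is bounded by $2\pi$ \emph{independently of the distance of $E_1$ to the spectrum} (an elementary argument-function computation via the spectral theorem), while the remainder terms are $O(\alpha)$ using $\Vert i\eta/(z-H_\Lambda)\Vert\leqslant 1$. Without this, or an equivalent device, your constants $C,\mu$ would depend on $\Lambda$ and on the field, and the lemma as stated would not follow.
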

\begin{proof}
	As $\psi\in \hd$, the first inequality in~\eqref{eq:density_states} follows by an immediate application of the Cauchy-Schwarz inequality. For the second inequality, notice first that the statement is trivial for clustered $\x\in\mathcal{C}$, since  $N_I(\x)\leqslant 1$; whence henceforth $\x\not\in\mathcal{C}$. Furthermore, as $H_\Lambda$ has discrete spectrum, it is always possible to choose a larger interval $I=[0,E_1]$ satisfying $E_1<2(1-3\Delta^{-1})$ and with $E_1$ in a spectral gap, $E_1\not\in\spec(H_\Lambda)$. We rewrite the projection operator $P_I(H_\Lambda)$ as a contour integral
	\begin{equation*}
		P_I(H_\Lambda) = \frac{1}{2\pi i}\int_\Gamma\frac{1}{z-H_\Lambda}\,\mathrm{d}z,
	\end{equation*}
	where $\Gamma$ is the positively oriented closed path along the rectangle in the complex plane with corners $E_1\pm i\alpha$ and $-1\pm i\alpha$ for some $\alpha>0$. We treat the two vertical and horizontal edges separately. Let $\mathcal{P}$ be the projection onto the clustered configurations and $\mathcal{Q}\equiv\mathcal{Q}^{(2)}=\mathds{1}-\mathcal{P}$. We split
	\begin{equation*}
		H_\Lambda = \bigl[\mathcal{P}H_\Lambda\mathcal{P} + \mathcal{Q}H_\Lambda\mathcal{Q}\bigr]
			+ \bigl[\mathcal{Q}H_\Lambda\mathcal{P} + \mathcal{P}H_\Lambda\mathcal{Q}\bigr] =: H_0 + H_1
	\end{equation*}
	and use the resolvent equation
	\begin{equation*}
		\mathcal{Q}\frac{1}{z-H_\Lambda}\mathcal{Q}
			= r_\mathcal{Q}(z) + r_\mathcal{Q}(z)\, \mathcal{Q}H_\Lambda\mathcal{P}\,
				\frac{1}{z-H_\Lambda}\,\mathcal{P}H_\Lambda\mathcal{Q}\, r_\mathcal{Q}(z),
	\end{equation*}
	where $r_\mathcal{Q}(z) = \bigl(\mathcal{Q}(z-H_\Lambda)\mathcal{Q}\bigr)^{-1}$. By Lemma~\ref{lemma:thresholds}, $r_\mathcal{Q}(z)$ is analytic in the interior of $\Gamma$, whence its contour integral along $\Gamma$ vanishes. It follows that for any $\x\not\in\mathcal{C}$
	\begin{equation*}
		N_I(\x) = \frac{1}{2\pi i}\int_\Gamma\langle\delta_\x,\frac{1}{z-H_\Lambda}\delta_\x\rangle\,\mathrm{d}z
			= \frac{1}{2\pi i}\int_\Gamma\langle\psi_{(\x,z)},\frac{1}{z-H_\Lambda}\psi_{(\x,z)}\rangle\,\mathrm{d}z,
	\end{equation*}
	where $\psi_{(\x,z)} = \mathcal{P}H_\Lambda\mathcal{Q}\,r_\mathcal{Q}(z)\delta_\x$. Notice that $\psi_{(\x,z)}$ is supported on clustered configurations $\y\in\mathcal{C}\cap\chiln$ with exactly $n$ down spins for all $\x\in\chiln$. By the Combes--Thomas estimate \eqref{eq:CT}, we have
	\begin{equation*}\label{bound_psi}
		\vert\psi_{(\x,z)}(\y)\vert \leqslant \sum\limits_{\delta_\w\in\mathrm{ran}\mathcal{Q}}
				\vert H_\Lambda(\y,\w)\vert \vert G_\Lambda^{(2)}(\w,\x;z)\vert
			\leqslant 2C_\textsc{t}e^{\mu_\textsc{t}}e^{-\mu_\textsc{t}\,d(\x,\y)}.
	\end{equation*}
	Nonvanishing contributions in the sum have $d(\y,\w)=1$, $H_\Lambda(\y,\w)=1$. For each given $\y\in\mathcal{C}$, at most two $\w$ satisfy those conditions. Note that the bound is independent of $z\in\Gamma$. Setting $C:=2C_\textsc{t}e^{\mu_\textsc{t}}$, we infer
	\begin{align}
		\Vert\psi_{(\x,z)}\Vert^2 &=\, \sum\limits_{\y\in\mathcal{C}\cap\chiln}\vert\psi_{(\x,z)}(\y)\vert^2
			\leqslant C^2\,e^{-\mu_\textsc{t}\,d(\x,\mathcal{C})}
				\sum\limits_{\y\in\mathcal{C}\cap\chiln}e^{-\mu_\textsc{t}\vert x_1 -y_1\vert} \notag \\
			&\leqslant\, C^2\,\coth\left(\frac{\mu_\textsc{t}}{2}\right)e^{-\mu_\textsc{t}\,d(\x,\mathcal{C})}.
			\label{eq:bound_psi2}
	\end{align}
	The first inequality is by $2d(\x,\y)\geqslant\vert x_1-y_1\vert+d(\x,\mathcal{C})$ and the second by an explicit computation. Any $z$ on the two horizontal edges of the contour $\Gamma$ has constant imaginary part $\pm\alpha$ and thus there exists a finite $C$ such that
	\begin{equation}\label{eq:bound_psi3}
		\vert \langle\psi_{(\x,z)},\frac{1}{z-H_\Lambda}\psi_{(\x,z)}\rangle\vert
			\leqslant \Vert\psi_{(\x,z)}\Vert^2 \Vert\frac{1}{z-H_\Lambda}\Vert
			\leqslant \frac{C}{\alpha}\, e^{-\mu_\textsc{t}\,d(\x,\mathcal{C})}.
	\end{equation}	
	The two vertical edges of the contour $\Gamma$ are parametrized by $-1 + i\eta$ and $E_1 + i\eta$, $\eta\in[-\alpha,\alpha]$. We only consider the line $E_1+i\eta$ and denote it by $\Gamma_1$, as the treatment of $-1+i\eta$ is similar to \eqref{eq:bound_psi3}. Using the resolvent identity	$r_\mathcal{Q}(E_1+i\eta) = r_\mathcal{Q}(E_1) - i\eta\,r_\mathcal{Q}(E_1)\,r_\mathcal{Q}(E_1+i\eta)$	and setting $\varphi_{(\x,E_1+i\eta)} := \mathcal{P}H_\Lambda\mathcal{Q}\,r_\mathcal{Q}(E_1)\,r_\mathcal{Q}(E_1+i\eta)\delta_\x$, we have
	\begin{align}
		\langle\psi_{(\x,z)},\frac{1}{z-H_\Lambda}\psi_{(\x,z)}\rangle
			=\,&\, \langle\psi_{(\x,E_1)},\frac{1}{z-H_\Lambda}\psi_{(\x,E_1)}\rangle\label{eq_expansion_1} \\
			&\,-\, \langle\psi_{(\x,z)},\frac{i\eta}{z-H_\Lambda}\varphi_{(\x,z)}\rangle
				+\langle\varphi_{(\x,z)},\frac{i\eta}{z-H_\Lambda}\psi_{(\x,E_1)}\rangle.\label{eq_expansion_2}
	\end{align}
	By \eqref{eq:bound_psi2}, the integral along $\Gamma_1$ of the right-hand side of \eqref{eq_expansion_1} is bounded by
	\begin{equation*}
		\Vert \int_{-\alpha}^\alpha \frac{\mathrm{d}\eta}{E_1-H_\Lambda+i\eta}\Vert\,\Vert\psi_{(\x,E_1)}\Vert^2
			\leqslant C e^{-\mu_\textsc{t}\,d(\x,\mathcal{C})},
	\end{equation*}
	where $C$ is some finite constant independent of $\alpha>0$. This follows by applying the spectral theorem to $\bigl(E_1-H_\Lambda+i\eta\bigr)^{-1}$ and noting that for any $\epsilon\neq 0$ we have
	\begin{equation*}
		\vert\int_{-\alpha}^{\alpha}\frac{\mathrm{d}\eta}{\epsilon +i\eta}\vert \leqslant \vert \operatorname{arg}(\epsilon+i\alpha) - \operatorname{arg}(\epsilon-i\alpha)\vert\leqslant 2\pi.
	\end{equation*}
	
	We now turn to the two terms \eqref{eq_expansion_2}. As $\Vert\mathcal{P}H_\Lambda\mathcal{Q} \Vert \leqslant 2$ and $\Vert r_\mathcal{Q}(E_1+i\eta)\Vert$ is uniformly bounded in $\eta$, so is $\Vert\varphi_{(\x,E_1+i\eta)}\Vert$. Moreover, since $\Vert i\eta / (z-H_\Lambda)\Vert \leqslant 1$, there exists a finite constant $C$ such that the integral along $\Gamma_1$ of each of the two terms in \eqref{eq_expansion_2} is bounded by
	\begin{equation*}
		2\alpha \Vert\varphi_{(\x,z)}\Vert\,\Vert\psi_{(\x,z)}\Vert
			\leqslant 2\alpha\, C e^{-\mu\,d(\x,\mathcal{C})}
	\end{equation*}
	with $\mu=\mu_\textsc{t}/2$. This concludes the proof.
\end{proof}

\subsection{Deterministic entropy bound}

The proof of Theorem~\ref{thm:entropy} combines the result of Lemma~\ref{lemma:density_states} with some summability properties of factors of the form $\exp\bigl[-\mu\, d(\x,\mathcal{C})\bigr]$.
\begin{proof}[Proof of Theorem~\ref{thm:entropy}]
	Let $0\leqslant n\leqslant |\Lambda|$. We consider a normalized vector $\psi\in\bigoplus_{k=0}^n\mathcal{H}_\Lambda^k$ with $\rho =\vert\psi\rangle\langle\psi\vert$ and $\rho_B=\tr_{\mathcal{H}_{B^c}}\rho$ on the (non-trivial) interval $B\subset\Lambda$. By the monotonicity of the R\'enyi entropy, we can restrict to $\alpha\in(0,1)$. Moreover, to establish \eqref{eq:entropy_general}, it suffices to show
	\begin{equation*}
		\tr\bigl[(\rho_B)^\alpha\bigr] \leqslant c_\alpha s^\gamma
	\end{equation*}
	for some ($\alpha$-dependent) constants $c_\alpha,\gamma\in(0,\infty)$. The reduced state $\rho_B$ has matrix elements
	\begin{equation*}
		\rho_B(\x,\y) = \sum\limits_{\z\in\mathcal{X}_{B^c}} \overline{\psi(\lbrace\x,\z\rbrace)}\psi(\lbrace\y,\z\rbrace)
	\end{equation*}
	for configurations $\x,\y\in\mathcal{X}_B$ with at most $s:=\min\lbrace n,\vert B\vert\rbrace$ down spins. By the normalization $\tr\rho_B =1$, every diagonal element satisfies $\rho_B(\x,\x)\leqslant 1$. In a first step, we distinguish the cases $s=\vert B\vert$ and $s=n<\vert B\vert$. If $s=\vert B\vert$, Jensen's inequality for $\alpha\in(0,1)$ yields
	\begin{equation}\label{eq:bound_sB}
		\tr\bigl[(\rho_B)^\alpha\bigr] \leqslant\sum\limits_{\x\in\mathcal{X}_B^{\leqslant\vert B\vert}}\rho_B(\x,\x)^\alpha
			\leqslant 2 + \sum\limits_{\x\in\mathcal{X}_B^{<\vert B\vert}\setminus\lbrace\0\rbrace}\rho_B(\x,\x)^\alpha.
	\end{equation}
	For the last inequality, we excluded from the sum the diagonal elements with $\x=\0$, corresponding to all spins up on $B$, and $\x=\B$, corresponding to all spins down on $B$, and estimated them trivially.
	
	Let now $s=n<\vert B\vert$ and denote by  $P_n$ the orthogonal projection onto the subspace $\mathcal{H}_B^n$ with exactly $n$ down spins. Observe that the non-trivial matrix elements of $\rho_B P_n$ satisfy
	\begin{equation*}
		(\rho_B P_n)(\x,\y) = \overline{\psi(\lbrace\x,\0\rbrace)}\psi(\lbrace\y,\0\rbrace)
			= \overline{\psi_B(\x)}\psi_B(\y),\qquad \x\in\mathcal{X}_B,\y\in\mathcal{X}_B^n,
	\end{equation*}
	where $\psi_B= \sum_{\x\in\mathcal{X}_B}\psi(\x)\delta_\x$ denotes the restriction of $\psi$ to its components on $B$. Hence, $\rho_B P_n$ has rank one and may be cast as $\rho_B P_n = \overline{\rho}_B P_n$ with $\overline{\rho}_B = \vert\psi_B\rangle\langle\psi_B\vert$. Similarly $P_n\rho_B = P_n\overline{\rho}_B$. Setting $Q_n = 1-P_n$, we have the decomposition
	\begin{align*}
		\rho_B &=\, Q_n \rho_B Q_n + Q_n \overline{\rho}_B P_n + P_n \overline{\rho}_B Q_n + P_n \overline{\rho}_B P_n
			=: Q_n \rho_B Q_n + \widehat{\rho}_B.
	\end{align*}
	Since $\overline{\rho}_B$ has rank one, $\vert\widehat{\rho}_B\vert$ has rank at most two and thus $\tr\bigl[\vert\widehat{\rho}_B\vert^\alpha\bigr]\leqslant 2$ for any $\alpha\in (0,1)$. We conclude that if $n<\vert B\vert$
	\begin{align*}
		\tr\bigl[(\rho_B)^\alpha\bigr]	
			&\leqslant\,\left(\tr\bigl[(Q_n\rho_B Q_n)^\alpha\bigr]
				+ \tr\bigl[\vert\widehat{\rho}_B\vert^\alpha\bigr]\right)
			\leqslant 2 + \tr\bigl[(Q_n\rho_B Q_n)^\alpha\bigr]
	\end{align*}
	and by Jensen's inequality
	\begin{equation}\label{eq:bound_sn}
		\tr\bigl[(Q_n\rho_B Q_n)^\alpha\bigr]
			\leqslant 1 + \sum\limits_{\x\in\mathcal{X}_B^{<n}\setminus\lbrace\0\rbrace}\rho_B(\x,\x)^\alpha,
	\end{equation}
	where the diagonal element with $\x=\0$ was excluded from the sum and estimated trivially.
	
	In view of \eqref{eq:bound_sB} and \eqref{eq:bound_sn}, and inserting the expression for $\rho_B(\x,\x)$, it remains in both cases $s=\vert B\vert$ and $s=n$ to estimate
	\begin{equation}\label{eq:estimate_rho}
		\sum\limits_{\x\in\mathcal{X}_B^{<s}\setminus\lbrace\0\rbrace}\rho_B(\x,\x)^\alpha
			= \sum\limits_{j=1}^{s-1}\sum\limits_{\x\in\mathcal{X}_B^j}
			\left(\sum\limits_{k=1}^{n-j}\sum\limits_{\z\in\mathcal{X}_{B^c}^k}
			\vert\psi(\lbrace\x,\z\rbrace)\vert^2\right)^\alpha.
	\end{equation}
	Let now $\psi_i$ be the orthogonal projection of $\psi$ onto the subspace $\mathcal{H}_\Lambda^i$ with exactly $i$ down spins. For $\x\in\mathcal{X}_B^j$ and $\z\in\mathcal{X}_{B^c}^k$, we can then bound
	\begin{equation}\label{eq:estimate_psi}
		\vert\psi(\lbrace\x,\z\rbrace)\vert^2 \leqslant N_I(\lbrace\x,\z\rbrace)\Vert\psi_{j+k}\Vert^2
			\leqslant C e^{-\mu\, d(\lbrace\x,\z\rbrace,\mathcal{C})} \Vert\psi_{j+k}\Vert^2,
	\end{equation}
	where we used the Cauchy--Schwarz inequality and Lemma~\ref{lemma:density_states}. With the normalization $\sum_i\Vert\psi_i\Vert^2 =1$, the right-hand side of \eqref{eq:estimate_rho} is bounded by
	\begin{equation*}
		\sum\limits_{j=1}^{s-1}\sum\limits_{\x\in\mathcal{X}_B^j}\left(C^\alpha\max\limits_{1\leqslant k\leqslant n-j}
			\sum\limits_{\z\in\mathcal{X}_{B^c}^k}e^{-\alpha\mu\, d(\lbrace\x,\z\rbrace,\mathcal{C})}\right).
	\end{equation*}
	The sum in the parenthesis is over configurations $\lbrace\x,\z\rbrace$ with $j+k\leqslant n$ spins down, of which at least one is in $B$ and one in $B^c$, but without the eventuality of $\x=\B$, corresponding to all spins down on $B$. By the inequality $3 d(\lbrace\x,\z\rbrace,\mathcal{C})\geqslant d(\lbrace\x,\z\rbrace,\mathcal{C}) + d(\x,\mathcal{C}) + d(\x,\partial B)$ followed by a successive application of Lemmas~\ref{lemma:summability_2} and~\ref{lemma:summability_3}, we obtain the bound
	\begin{equation*}\label{eq:final_bound}
		C^\alpha C_1\left(s,\frac{\alpha\mu}{3}\right) \sum\limits_{j=1}^{s-1}\sum\limits_{\x\in\mathcal{X}_B^j}
			e^{-\frac{\alpha\mu}{3}\bigl[d(\x,\mathcal{C}) + d(\x,\partial B)\bigr]}
			\leqslant C^\alpha C_1\left(s,\frac{\alpha\mu}{3}\right)C_2\left(\frac{\alpha\mu}{3}\right)(s-1),
	\end{equation*}
	which concludes the proof since $C_1(s,\mu)$ is affine linear in $s$.
\end{proof}

\subsection{Area law with disorder}

The proof of Theorem~\ref{thm:entropy_random} essentially relies on that of Theorem~\ref{thm:entropy} above, though augmented by an additional decay, on average, of $N_I(\x)$ in the number of down spins of $\x$. In fact, the following general lemma holds.
\begin{lemma}[cf.~Lemma~1.2 in~\cite{BW17}]\label{lemma:decay_n}
	Let $J\subset\mathbbm{R}$ be a compact interval and $\lbrace b_x\rbrace_{x\in\Lambda}$ a collection of iid random variables with non-trivial support $\supp b_x\subset[0,\infty)$. Then, there exist $C,c\in(0,\infty)$ such that for all $n\geqslant 0$, all $\Lambda\subset\mathbbm{Z}$ and all $\x\in\chiln$
	\begin{equation}\label{eq:decay_n}
		\mathbbm{E}\left[N_J(\x)\right]\leqslant C e^{-c n}.
	\end{equation}
\end{lemma}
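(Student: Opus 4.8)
\textbf{Proof proposal for Lemma~\ref{lemma:decay_n}.}

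The plan is to exploit the fact that $N_J(\x) = \langle \delta_\x, P_J(H_\Lambda)\delta_\x\rangle$ is controlled by the potential energy stored in the configuration $\x$: if $\x$ has $n$ down spins spread out into many clusters, the threshold estimate of Lemma~\ref{lemma:thresholds} already forces the energy up, while if $\x$ has $n$ down spins clustered into few clusters, then $\x$ necessarily contains a long run of consecutive sites, and the random potential $V_\Lambda$ applied to that run is a sum of many i.i.d.\ nonnegative variables, which is bounded below by a positive multiple of $n$ with overwhelming probability. First I would observe that $H_\Lambda \geqslant V_\Lambda$ on $\ell^2(\chiln)$ in the following averaged sense: for $\x \in \chiln$, write $V(\x) = \sum_{x \in \x} b_x$ for the classical potential value, so that $V_\Lambda \delta_\x = V(\x)\delta_\x$ and $\langle \delta_\x, H_\Lambda \delta_\x\rangle \geqslant V(\x)$ (the kinetic part $h_\Lambda$ has nonnegative diagonal). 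More to the point, I want a lower bound on the \emph{whole} operator $H_\Lambda$ restricted to a neighbourhood of $\x$, not just its diagonal entry.

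The cleanest route is a Combes--Thomas/locality argument combined with the following deterministic alternative. Fix $\x \in \chiln$ with $n$ down spins and let $m = m(\x)$ be the number of clusters of $\x$. If $m \geqslant \lceil \lambda n\rceil$ for a small constant $\lambda>0$ to be chosen, then by Lemma~\ref{lemma:thresholds}, on the spectral subspace $\ran \mathcal{Q}^{(\lceil\lambda n\rceil)}$ one has $H_\Lambda \geqslant \lceil \lambda n\rceil (1-\Delta^{-1})\,\mathcal{Q}^{(\lceil\lambda n\rceil)}$; since $\delta_\x$ lies in that subspace and $J$ is a fixed compact interval, for $n$ large enough $J$ lies entirely below the threshold and $N_J(\x) = 0$. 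If instead $m < \lceil \lambda n\rceil$, then by pigeonhole $\x$ contains a cluster of length at least $n/m > 1/\lambda$, hence a block $\Lambda_0 \subset \Lambda$ of at least $n/\lceil\lambda n\rceil \gtrsim 1/\lambda$ consecutive sites all occupied; but we actually need the occupied sites to carry a large potential, and summing over all $\x$ the relevant quantity is really $\sum_{x\in\x} b_x$, a sum of $n$ i.i.d.\ terms. Here I would use a standard large-deviations bound: since $\supp b_x$ is non-trivial, there exist $\kappa, p > 0$ with $\mathbbm{P}(b_x \geqslant \kappa) \geqslant p$, so $\mathbbm{P}(V(\x) \leqslant \tfrac12 \kappa p\, n) \leqslant e^{-c_0 n}$ for some $c_0 > 0$ depending only on the law of $b_0$. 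On the complementary event $V(\x) \geqslant \tfrac12\kappa p\,n =: E_n$, I claim $N_J(\x)$ is exponentially small in $n$: indeed $\delta_\x$ is an eigenvector of $V_\Lambda$ with eigenvalue $V(\x) \geqslant E_n$, and writing $H_\Lambda = H_\Lambda^{(0)} + V_\Lambda$ where $H_\Lambda^{(0)} := h_\Lambda + V_\Lambda - V(\x)\,\mathds{1}$ localized near $\x$ is bounded, one estimates $\langle\delta_\x, P_J(H_\Lambda)\delta_\x\rangle$ by a resolvent/Combes--Thomas bound around energy $\sup J < E_n$; more simply, since $H_\Lambda \geqslant V_\Lambda$ in the form sense fails for off-diagonal reasons, I would instead note $\|(H_\Lambda - z)^{-1}\delta_\x\|$ is controlled by $\mathrm{dist}(z, \spec(H_\Lambda \text{ near } \x))$ and use that the hopping terms of $h_\Lambda$ connecting $\x$ to configurations of comparably low potential require moving $\Omega(n)$ down spins, so the contour-integral representation of $P_J(H_\Lambda)$ (as in the proof of Lemma~\ref{lemma:density_states}) yields $N_J(\x) \leqslant C e^{-cn}$.

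Putting the two cases together and averaging: $\mathbbm{E}[N_J(\x)] = \mathbbm{E}[N_J(\x)\,\mathbf{1}\{V(\x) \geqslant E_n\}] + \mathbbm{E}[N_J(\x)\,\mathbf{1}\{V(\x) < E_n\}] \leqslant C e^{-cn} + 1\cdot e^{-c_0 n} \leqslant C' e^{-c'n}$, using $N_J(\x)\leqslant 1$ always. The main obstacle is the deterministic step in the high-potential case: one needs that $N_J(\x)$ is small whenever the \emph{potential at} $\x$ is large, uniformly in the potential elsewhere and in $\Lambda$. The honest way to get this is to adapt the contour-integral argument of Lemma~\ref{lemma:density_states}: decompose $H_\Lambda$ according to $\delta_\x$ versus its complement, observe that any path in configuration space from $\x$ to a configuration of low potential (hence possibly low energy) must pass through configurations whose potential decreases only by $O(1)$ per hop — so reaching energy $\leqslant \sup J$ costs $\Omega(E_n) = \Omega(n)$ hops — and apply a Combes--Thomas estimate in this "potential-distance" rather than spatial distance. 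This is exactly the content of Lemma~1.2 in~\cite{BW17}, which is what the statement cites; the proof here would either invoke it directly or reproduce this weighted-resolvent estimate. The probabilistic ingredient (the large-deviation lower bound on $V(\x)$) is routine and is the only place the i.i.d.\ non-triviality hypothesis enters.
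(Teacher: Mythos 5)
The paper does not actually prove this lemma: it imports it verbatim from~\cite{BW17} (Lemma~1.2 there), so the only ``proof'' to compare against is the citation, and your proposal also ends by deferring the decisive step to that same reference. The sketch you offer in the meantime, however, contains two genuine gaps. First, in the many-cluster case the claim that $N_J(\x)=0$ once $J$ lies below the $k$-cluster threshold is false: Lemma~\ref{lemma:thresholds} bounds the \emph{compressed} operator $\mathcal{Q}^{(k)}H_\Lambda\mathcal{Q}^{(k)}$, not $H_\Lambda$ itself, and the low-energy eigenvectors of the full $H_\Lambda$ are not contained in $\ran(\mathds{1}-\mathcal{Q}^{(k)})$ --- they merely have exponentially small components there, which is exactly what the Combes--Thomas machinery of Lemmas~\ref{lemma:CT} and~\ref{lemma:density_states} quantifies (and only for $J\subset[0,E_1]$, not for a general compact $J$). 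Note also that the resulting decay is in $d(\x,\mathcal{C})$, which need not be comparable to $n$: a configuration with clusters of sizes $n-1$ and $1$ separated by one empty site has $d(\x,\mathcal{C})=1$. So this branch of your case distinction cannot be closed as written; in fact the cluster-count dichotomy is not needed at all, since your large-deviation bound on $V(\x)=\sum_{x\in\x}b_x$ is valid for every $\x\in\chiln$ regardless of its cluster structure.

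Second, and more importantly, the deterministic core of your argument --- ``$V(\x)\geqslant cn$ implies $N_J(\x)\leqslant Ce^{-c'n}$'' --- is asserted rather than proven, and the heuristics you give for it do not hold up. The statement that the potential ``decreases only by $O(1)$ per hop'' fails for unbounded $b_x$, since a single hop vacates a site whose coupling $b_x$ may be arbitrarily large. And the estimates that come for free are too weak: a large diagonal entry $\langle\delta_\x,H_\Lambda\delta_\x\rangle\geqslant V(\x)$ only controls the first moment of the spectral measure at $\delta_\x$, and operator antitonicity of the inverse ($H_\Lambda+1\geqslant V_\Lambda+1$, hence $(H_\Lambda+1)^{-1}\leqslant(V_\Lambda+1)^{-1}$) yields only
\begin{equation*}
	N_J(\x)\;\leqslant\;\frac{1+\sup J}{1+V(\x)},
\end{equation*}
i.e.\ polynomial rather than exponential decay after averaging. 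Upgrading this to an exponential bound (via a weighted-resolvent or Feynman--Kac/semigroup estimate adapted to the potential landscape) is precisely the content of Lemma~1.2 in~\cite{BW17}, so invoking it at this point makes the proposal circular at its one nontrivial step. Your probabilistic ingredient --- the Chernoff bound giving $\mathbbm{P}\bigl(V(\x)\leqslant\tfrac12\kappa p\,n\bigr)\leqslant e^{-c_0 n}$ from the non-triviality of $\supp b_x$, combined with $N_J(\x)\leqslant1$ on the bad event --- is correct and is indeed the only place the i.i.d.\ hypothesis enters; the gap is entirely in the quenched estimate on the good event.
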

\begin{proof}[Proof of Theorem~\ref{thm:entropy_random}]
	By the monotonicity of the R\'enyi entropy, we have $S_\alpha \leqslant S_\epsilon$ and the expression to be bounded reduces to
	\begin{equation*}\label{eq:bound_random1}
		\mathbb{E}\left[\max\limits_{\substack{\psi\in\hd,\Vert\psi\Vert =1}}\tr\bigl[(\rho_B)^\epsilon\bigr]\right]
	\end{equation*}
	for some $\epsilon\in(0,1)$. Following the same initial steps as in the proof of Theorem~\ref{thm:entropy}, we conclude for any $\psi\in\hd$ with $\Vert\psi\Vert=1$ that
	\begin{align}
		\tr\bigl[(\rho_B)^\epsilon\bigr]
			&\leqslant\, 3 + \sum\limits_{j=1}^{\vert B\vert-1}\sum\limits_{\x\in\mathcal{X}_B^j}
			\left(\sum\limits_{k=1}^{\vert\Lambda\vert-j}\sum\limits_{\z\in\mathcal{X}_{B^c}^k}
			\vert\psi(\lbrace\x,\z\rbrace)\vert^2\right)^\epsilon\notag\\
			&\leqslant\, 3 + \sum\limits_{j=1}^{\vert B\vert-1}\sum\limits_{\x\in\mathcal{X}_B^j}
			\sum\limits_{k=1}^{\vert\Lambda\vert-j}\sum\limits_{\z\in\mathcal{X}_{B^c}^k}
			N_I(\lbrace\x,\z\rbrace)^\epsilon.\label{eq:bound_random2}
	\end{align}
	Now, by Lemmas~\ref{lemma:density_states} and~\ref{lemma:decay_n}, we have for any $\x\in\mathcal{X}_B^j$ and $\z\in\mathcal{X}_{B^c}^k$
	\begin{equation*}
		\mathbbm{E}\left[N_I(\lbrace\x,\z\rbrace)^\epsilon\right]
			\leqslant C^\epsilon e^{-\frac{\epsilon\mu}{2}d(\lbrace\x,\z\rbrace,\mathcal{C})}e^{-\frac{\epsilon c}{2}(j+k)}.
	\end{equation*}
	Taking the expectation value in~\eqref{eq:bound_random2} and applying Lemma~\ref{lemma:summability_2}, the sum over $\x$ and $\z$ is bounded by $C_1\left(j+k,\frac{\epsilon\mu}{2} \right)$ and thus depends linearly on $j+k$. But with the factor $\exp\bigl\lbrace -\frac{\epsilon c}{2}(j+k)\bigr\rbrace$, the sum over $j$ and $k$ is bounded independently of $B$.
\end{proof}

\appendix
\section{Summability results}

In this appendix we collect three technical lemmas on summability properties of factors of the form $\exp\bigl[-\mu\, d(\x,\y)\bigr]$.
\begin{lemma}[cf.~Lemma~B.2 in~\cite{BW17}]\label{lemma:summability_1}
	For any $\mu>0$ and $n\in\mathbb{N}$, we have
	\begin{equation}\label{inequality_sum}
		\sup\limits_{\x\in\mathcal{C}^{(1)}_\mathbbm{Z}}\sum\limits_{\y\in\mathcal{X}^n_\mathbbm{Z}}
			e^{-\mu\, d(\x,\y)}
			\leqslant \frac{1}{1-e^{-\mu}}\left(\prod\limits_{k=1}^{\infty}\frac{1}{1-e^{-k\mu}}\right)^2
			=:C_{\infty}(\mu).
	\end{equation}
\end{lemma}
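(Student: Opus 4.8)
The plan is to bound the sum configuration by configuration, reducing everything to a one-dimensional geometric-series estimate. Fix a cluster $\x = \{x_1, x_1+1, \dots, x_1+n-1\} \in \mathcal{C}^{(1)}_\mathbbm{Z}$ and an arbitrary configuration $\y = \{y_1 < y_2 < \dots < y_n\} \in \mathcal{X}^n_\mathbbm{Z}$. The idea is to split the $\ell^1$-distance $d(\x,\y) = \sum_{k=1}^n |x_k - y_k|$ into a part that pins down the overall location of $\y$ relative to $\x$ and a part that controls the internal spreading of $\y$. Concretely, I would introduce new summation variables: first sum over the location of $y_1$ (or equivalently over $|x_1 - y_1|$), and then over the gaps $g_k := y_{k+1} - y_k - 1 \geq 0$ for $k = 1, \dots, n-1$, which record how much $\y$ fails to be a cluster. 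Since $\y$ is uniquely determined by $y_1$ and the gaps $(g_1,\dots,g_{n-1})$, the sum over $\y \in \mathcal{X}^n_\mathbbm{Z}$ becomes a sum over $y_1 \in \mathbbm{Z}$ and $(g_1,\dots,g_{n-1}) \in \mathbbm{N}_0^{n-1}$.

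The key step is then a pointwise lower bound of the form $d(\x,\y) \geq |x_1 - y_1| + \sum_{k=1}^{n-1} c_k g_k$ for suitable weights, or more simply an estimate showing that $\sum_{k=1}^n |x_k - y_k|$ dominates both $|x_1 - y_1|$ and each partial sum $g_1 + g_2 + \dots + g_j$ appropriately. The cleanest route: since $x_{k+1} = x_k + 1$ and $y_{k+1} = y_k + 1 + g_k$, one has $x_k - y_k = (x_1 - y_1) - (g_1 + \dots + g_{k-1})$, so $|x_k - y_k| \geq (g_1 + \dots + g_{k-1}) - |x_1 - y_1|$ and also $|x_k - y_k| \geq |x_1 - y_1| - (g_1 + \dots + g_{k-1})$; combining these over $k$ and being slightly careful gives $d(\x,\y) \geq |x_1-y_1|$ and separately $2 d(\x,\y) \geq \sum_{k=1}^{n-1} (n-k) g_k \geq \sum_{k=1}^{n-1} g_k$, or something in this spirit. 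Using $e^{-\mu d(\x,\y)} \leq e^{-\frac{\mu}{2}|x_1 - y_1|} e^{-\frac{\mu}{2}\sum_k g_k}$-type splittings may lose constants; to land exactly on $C_\infty(\mu)$ as stated, I would instead track that $d(\x,\y) = |x_1 - y_1| + \sum_{k \geq 2} |x_k - y_k|$ and use the telescoping structure to write $d(\x,\y) \geq |x_1 - y_1| + (\text{sum of gaps, each counted with its multiplicity})$, which after summing the geometric series in $y_1$ gives a factor $\frac{1}{1-e^{-\mu}}$ and after summing over each $g_k \in \mathbbm{N}_0$ with weight $e^{-k\mu g_k}$ (the $k$ coming from the multiplicity $n-k+1 \geq \dots$, reindexed) gives the double product $\left(\prod_{k=1}^\infty \frac{1}{1-e^{-k\mu}}\right)^2$. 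Taking the supremum over $\x$ is trivial by translation invariance, since the bound does not depend on $x_1$.

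The main obstacle is getting the combinatorics of the multiplicities exactly right so that the constant matches $C_\infty(\mu)$ rather than some cruder bound: one must correctly account for the fact that a gap $g_k$ inserted between $y_k$ and $y_{k+1}$ shifts \emph{all} subsequent points $y_{k+1}, \dots, y_n$, hence contributes to $|x_j - y_j|$ for every $j > k$, giving it an effective weight in the exponent that grows linearly. This is what produces the $\prod_k (1-e^{-k\mu})^{-1}$ factor (and the square presumably arises from splitting the displacement symmetrically around the "centre" of the cluster, or from treating points to the left and right of $y_1$'s alignment separately). I would first prove the weaker statement with a larger constant to make sure the summation framework is sound, then sharpen the exponent bookkeeping. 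Once the pointwise exponent inequality is established, the remaining work is just summing independent geometric series, which is routine. Since the lemma is quoted as "cf.~Lemma~B.2 in~\cite{BW17}", I would also check whether the cleanest presentation simply cites and adapts that argument rather than reproving from scratch.
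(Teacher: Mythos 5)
The paper itself gives no proof of this lemma --- it is quoted from \cite{BW17} (Lemma B.2) --- so your closing remark that the cleanest presentation may simply cite that reference is exactly what the authors do. Judged as a standalone argument, your overall framework (reparametrize $\y$, reduce to geometric series, use translation invariance for the supremum) is the right one, but the specific route you commit to has a genuine gap, which you half-acknowledge. Writing $x_k=x_1+k-1$, $u_k:=y_k-x_k$ and $g_j:=y_{j+1}-y_j-1\geqslant 0$, one has $u_k=u_1+(g_1+\dots+g_{k-1})$ and $d(\x,\y)=\sum_k\vert u_k\vert$. The pointwise inequality your multiplicity bookkeeping needs, $d(\x,\y)\geqslant \vert u_1\vert+\sum_{j=1}^{n-1}(n-j)g_j$, holds with equality when $u_1\geqslant 0$ but is \emph{false} when $u_1<0$: for $n=2$, $u_1=-1$, $g_1=1$ one gets $d=\vert{-1}\vert+\vert 0\vert=1$ while the right-hand side is $2$. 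The weaker inequalities you list ($d\geqslant\vert u_1\vert$ and $d\geqslant\sum_j(n-j)g_j-n\vert u_1\vert$) only combine to $d\geqslant\frac{1}{n+1}\sum_j(n-j)g_j$, an $n$-dependent loss in the exponent that destroys the uniform constant $C_\infty(\mu)$.

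The missing idea is to anchor the decomposition not at $y_1$ but at the zero crossing of the displacement sequence. The map $\y\mapsto(u_1,\dots,u_n)$ is a bijection from $\mathcal{X}^n_\mathbbm{Z}$ onto weakly increasing integer $n$-tuples $u_1\leqslant\dots\leqslant u_n$ (since $y_k<y_{k+1}$ iff $u_k\leqslant u_{k+1}$), so the sum equals $\sum_{u_1\leqslant\dots\leqslant u_n}q^{\sum_k\vert u_k\vert}$ with $q=e^{-\mu}$. Splitting according to the number $m$ of strictly negative entries, the tuple decomposes into a partition $(-u_m,\dots,-u_1)$ with exactly $m$ positive parts, generating function $q^m\prod_{k=1}^m(1-q^k)^{-1}$, and a partition $(u_n,\dots,u_{m+1})$ into at most $n-m$ non-negative parts, generating function $\prod_{k=1}^{n-m}(1-q^k)^{-1}$. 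Bounding both products by $\prod_{k=1}^\infty(1-q^k)^{-1}$ and summing $q^m$ over $m\geqslant 0$ gives precisely $C_\infty(\mu)$; the square thus reflects the left- versus right-displaced particles (negative versus non-negative $u_k$), not a symmetric split about the centre of the cluster, and the prefactor $(1-e^{-\mu})^{-1}$ counts the location of the sign change rather than the position of $y_1$.
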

\begin{lemma}\label{lemma:summability_2}
	For any $\mu>0$, $n\geqslant 2$ and any interval $B\subset\mathbbm{Z}$, we have
	\begin{equation*}
		\sum\limits_{\substack{\x\in\mathcal{X}_\mathbbm{Z}^{n}\\ \x\cap B\neq \0,\B\\ \x\cap B^c\neq \0}}
		e^{-\mu\, d(\x,\mathcal{C})}
		\leqslant 2\, C_\infty\left(\frac{\mu}{3}\right)
			\bigl(\min\lbrace n,\vert B\vert\rbrace + \coth\left(\frac{\mu}{6}\right)\bigr)
		=: C_1\bigl(\min\lbrace n,\vert B\vert\rbrace,\mu\bigr),
	\end{equation*}
	where $\0$ denotes the empty configuration and $\B$ the configuration where every site in $B$ is occupied.
\end{lemma}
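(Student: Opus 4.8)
\textbf{Proof plan for Lemma~\ref{lemma:summability_2}.}

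The plan is to reduce the sum over configurations $\x$ straddling the interval $B$ to the already-established geometric bound of Lemma~\ref{lemma:summability_1}. The key geometric observation is that any configuration $\x$ with $n$ occupied sites that is not fully clustered can be ``repaired'' to a clustered configuration at a cost controlled by $d(\x,\mathcal{C})$, and that the constraint $\x\cap B\neq\0,\B$ and $\x\cap B^c\neq\0$ forces at least one of the particles in $\x$ to sit near the boundary $\partial B$. Concretely, I would first fix a nearest clustered configuration $\y\in\mathcal{C}$ realizing $d(\x,\mathcal{C})=d(\x,\y)$; since $\x$ has a particle in $B$, a particle in $B^c$, and is not the block $\B$, the cluster $\y$ must overlap both $B$ and $B^c$ in a way that pins its location to within a bounded window around one of the (at most two) boundary bonds of $B$, up to a displacement absorbed by $d(\x,\y)$. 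This is where the $\min\{n,\vert B\vert\}$ enters: a cluster of length $n$ that meets $B$ non-trivially and is not contained in $B$ can have its left endpoint in only $O(\min\{n,\vert B\vert\})$ positions relative to $\partial B$.

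The second step is the summation itself. Using the triangle-type inequality $3\,d(\x,\mathcal{C}) \geqslant d(\x,\mathcal{C}) + d(\x,\mathcal{C}) + d(\x,\mathcal{C})$ split asymmetrically — more precisely, writing $d(\x,\mathcal{C})$ as a sum of a part $d(\x,\y)$ controlling the reconstruction of $\x$ from the reference cluster $\y$ and a part controlling the position of $\y$ — I would bound
\begin{equation*}
	\sum_{\substack{\x\in\mathcal{X}_\mathbbm{Z}^n\\ \x\cap B\neq\0,\B\\ \x\cap B^c\neq\0}} e^{-\mu\,d(\x,\mathcal{C})}
		\leqslant \sum_{\y\in\mathcal{C}^{(1)}_\mathbbm{Z},\ \y\ \mathrm{admissible}}\ \sum_{\x\in\mathcal{X}_\mathbbm{Z}^n} e^{-\frac{\mu}{3}d(\x,\y)}\, e^{-\frac{\mu}{3}(\text{position cost of }\y)},
\end{equation*}
where the inner sum over $\x$ is bounded uniformly by $C_\infty(\mu/3)$ via Lemma~\ref{lemma:summability_1}. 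The outer sum over admissible reference clusters $\y$ then splits into: a contribution from the $O(\min\{n,\vert B\vert\})$ clusters genuinely overlapping $\partial B$ (each contributing at most $1$ after the position cost, since for those the cost can be taken $0$), giving the term $\min\{n,\vert B\vert\}$; plus a geometric tail over clusters pushed away from $\partial B$, whose position cost grows linearly and sums to a factor $\coth(\mu/6)$ (a $\sum_{m\geqslant 0} e^{-\mu m/6}$-type series, doubled for the two boundary bonds). Combining with the overall factor $2$ from the two boundary bonds yields exactly $2C_\infty(\mu/3)\bigl(\min\{n,\vert B\vert\} + \coth(\mu/6)\bigr)$.

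The main obstacle I anticipate is the bookkeeping in the first step: making precise the claim that ``$\x$ straddles $B$'' together with ``$d(\x,\mathcal{C})$ is small'' forces the nearest cluster $\y$ to be near $\partial B$, and quantifying exactly how the $\ell^1$-distance $d(\x,\y)$ decomposes so that one factor of $e^{-\mu d/3}$ is spent pinning $\y$'s position, one on reconstructing $\x$, and the split is consistent with the constraint $\x\neq\B$ (which is what prevents the degenerate case where $\y$ could sit anywhere inside $B$). The counting of ``admissible'' clusters — why it is $O(\min\{n,\vert B\vert\})$ rather than $O(n)$ or $O(\vert B\vert)$ separately — requires noting that a cluster meeting $B$ but not equal to a sub-block filling $B$ has its endpoint constrained either by $n$ (a short cluster cannot reach far into $B$) or by $\vert B\vert$ (it cannot start too far left of $B$ without swallowing all of $B$), whichever is binding. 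Everything downstream is a routine geometric series computation of the type already carried out in \eqref{eq:bound_psi2}.
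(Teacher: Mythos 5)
Your plan is correct and follows essentially the same route as the paper: partition $\mathcal{X}^n_{\mathbbm{Z}}$ by the nearest clustered configuration $\w$, split the exponent three ways so that one third reconstructs $\x$ from $\w$ (summed via Lemma~\ref{lemma:summability_1} to give $C_\infty(\mu/3)$) and the rest pins the position of $\w$, then count the $O(\min\{n,|B|\})$ clusters touching $\partial B$ plus a geometric tail giving $\coth(\mu/6)$. The bookkeeping you flag as the main obstacle is resolved in the paper exactly along the lines you anticipate, by introducing the overhang index $I(\w)$ and the inequality $3\,d(\x,\w)\geqslant d(\x,\w)+d(\w,\partial B)+I(\w)$, with the case $\w\cap B=\B$ (controlled by $I(\w)$) and the case $\w\cap B\neq\B$ (controlled by $d(\w,\partial B)$) treated separately.
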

\begin{proof}
	The proof requires some preliminaries. Let $s:= \min\lbrace n,\vert B\vert\rbrace$. For clustered configurations $\w\in\mathcal{C}$ we define the index $I(\w)$ as the length of the shortest part of $\w$ left or right (outside) of $B$. If $s=\vert B\vert$, there are at most two $\w\in\mathcal{C}\cap\chiln$ for each $I(\w)=0,1,2,\ldots,\lfloor\frac{n-\vert B\vert}{2}\rfloor$.
	
	For any $\x\in\chiln$, there exists a (possibly not unique) $\w\in\mathcal{C}$ such that $d(\x,\mathcal{C}) = d(\x,\w)$. Let $\w(\x)$ denote the left-most such configuration, i.e., the one with lowest $w_1$ and let $[\w]$ be the equivalence class of configurations $\x\in\chiln$ such that $\w(\x)=\w$. The configuration space is thus partitioned into a union over clustered configurations $\w\in\mathcal{C}$ followed by a union over elements $\x\in[\w]$.
	
	Finally, observe that since $\x\cap B\neq \0,\B$ and $\x\cap B^c\neq \0$ the following inequality holds:
	\begin{equation*}
		3 d(\x,\mathcal{C}) = 3 d(\x,\w) \geqslant d(\x,\w) + d(\w,\partial B) + I(\w).
	\end{equation*}
	Hence we have
	\begin{equation}\label{eq:sum_d_I}
		\sum\limits_{\substack{\x\in\mathcal{X}_\mathbbm{Z}^{n}\\ \x\cap B\neq \0,\B\\ \x\cap B^c\neq \0}}
		e^{-\mu\, d(\x,\mathcal{C})}
			\leqslant \sum\limits_{\w\in\mathcal{C}}\sum\limits_{\substack{\x\in[\w]\\ \x\textnormal{ as before}}}
				e^{-\frac{\mu}{3}\bigl[d(\x,\w) + d(\w,\partial B) + I(\w)\bigr]}.
	\end{equation}
	By Lemma~\ref{lemma:summability_1}, the sum over $\x\in[\w]$ may be bounded by $C_\infty(\mu/3)$. The sum over $\w\in\mathcal{C}$ can now be split into those $\w$ with $\w\cap B =\B$ and those with $\w\cap B\neq \B$. In the first case, $d(\w,\partial B) = 0$ and in the second case $I(\w) = 0$. Note furthermore that the first case only arises if $s=\vert B\vert$. The right-hand side of~\eqref{eq:sum_d_I} is therefore bounded by $C_\infty(\mu/3)$ times
	\begin{equation*}
			\sum\limits_{\substack{\w\in\mathcal{C}\\ \w\cap B=\B}} e^{-\frac{\mu}{3} I(\w)} \delta_{s,\vert B\vert}
			+ \sum\limits_{\substack{\w\in\mathcal{C}\\ \w\cap B\neq\B}} e^{-\frac{\mu}{3} d(\w,\partial B)}
			\leqslant 2\bigl(s + \coth(\mu/6)\bigr).
	\end{equation*}
	The last inequality may be obtained as follows. For $s=\min\lbrace n,\vert B\vert\rbrace =\vert B\vert$, the first sum is bounded by $2/(1-e^{-\frac{\mu}{3}})$ and the second by $2\vert B\vert +2e^{-\frac{\mu}{3}}/(1-e^{-\frac{\mu}{3}})$. In fact, at most $2\vert B\vert$ clustered configurations $\w$ have $d(\w,\partial B)=0$ and all the others are either left or right of $B$ with at most two of them for each $d(\w,\partial B) = k\in\mathbbm{N}$. For $s=n<\vert B\vert$, the first sum does not contribute and the second is bounded by $2(n+1) + 4e^{-\frac{\mu}{3}}/(1-e^{-\frac{\mu}{3}})$, since here the clustered configurations $\w$ may be left or right of \emph{each boundary} of $B$.
\end{proof}
\begin{lemma}\label{lemma:summability_3}
	For any $\mu>0$, $n\in\mathbbm{N}$ and any interval $\Lambda\subset\mathbbm{Z}$, we have
	\begin{equation}
		\sum\limits_{\x\in\chiln} e^{-\mu \bigl[d(\x,\mathcal{C})+d(\x,\partial\Lambda)\bigr]}
			\leqslant \frac{2 C_\infty\left(\frac{\mu}{2}\right)}{1-e^{-\frac{\mu}{2}}} =: C_2(\mu).
	\end{equation}
\end{lemma}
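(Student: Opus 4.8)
The plan is to follow the partition-and-sum strategy used in the proof of Lemma~\ref{lemma:summability_2}. For every $\x\in\chiln$ fix a clustered configuration $\w(\x)\in\mathcal{C}$ (necessarily with $n$ particles) realizing the distance, i.e.\ $d(\x,\w(\x))=d(\x,\mathcal{C})$, picking the left-most one in case of ties, and let $[\w]:=\{\x\in\chiln:\w(\x)=\w\}$. This partitions $\chiln$ into the disjoint union $\bigcup_{\w\in\mathcal{C}\cap\chiln}[\w]$, so that
\begin{equation*}
	\sum_{\x\in\chiln} e^{-\mu[d(\x,\mathcal{C})+d(\x,\partial\Lambda)]}
		= \sum_{\w\in\mathcal{C}\cap\chiln}\ \sum_{\x\in[\w]} e^{-\mu[d(\x,\w)+d(\x,\partial\Lambda)]}.
\end{equation*}

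The geometric input is the elementary inequality
\begin{equation*}
	d(\x,\w)+d(\x,\partial\Lambda)\;\geqslant\;\tfrac{1}{2}\, d(\x,\w)+\tfrac{1}{2}\, d(\w,\partial\Lambda),
\end{equation*}
which is an immediate consequence of the triangle inequality $d(\w,\partial\Lambda)\leqslant d(\w,\x)+d(\x,\partial\Lambda)$ (the distance to $\partial\Lambda$ is $1$-Lipschitz in the $\ell^1$-metric on configurations). Inserting it and then enlarging the inner summation set from $[\w]$ to $\chiln$, Lemma~\ref{lemma:summability_1} applied with the clustered reference configuration $\w\in\mathcal{C}^{(1)}_\mathbbm{Z}$ bounds the inner sum by $C_\infty(\mu/2)$ uniformly in $\w$. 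Hence
\begin{equation*}
	\sum_{\x\in\chiln} e^{-\mu[d(\x,\mathcal{C})+d(\x,\partial\Lambda)]}
		\;\leqslant\; C_\infty\!\left(\tfrac{\mu}{2}\right)\sum_{\w\in\mathcal{C}\cap\chiln} e^{-\frac{\mu}{2}d(\w,\partial\Lambda)}.
\end{equation*}

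It remains to bound the last sum by counting clustered configurations according to their distance from $\partial\Lambda$. A clustered configuration of $n$ particles in the finite interval $\Lambda$ is determined by the position of its left-most particle, and $d(\w,\partial\Lambda)$ equals the smaller of the two gaps separating its support from the endpoints of $\Lambda$; thus for each integer $k\geqslant 0$ there are at most two such configurations with $d(\w,\partial\Lambda)=k$, namely at most one near each end of $\Lambda$. Consequently $\sum_{\w\in\mathcal{C}\cap\chiln} e^{-\frac{\mu}{2}d(\w,\partial\Lambda)}\leqslant 2\sum_{k\geqslant 0}e^{-\frac{\mu}{2}k}=2/(1-e^{-\mu/2})$, and combining the last two displays yields the claim with $C_2(\mu)=2C_\infty(\mu/2)/(1-e^{-\mu/2})$.

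No real difficulty is anticipated: the only points deserving a line of justification are the triangle-inequality/Lipschitz step (and the observation that the resulting halved exponent still fits the hypothesis of Lemma~\ref{lemma:summability_1}) and the ``at most two configurations per distance value'' count, which is a short direct check on the position of the left-most particle of $\w$ in $\Lambda$.
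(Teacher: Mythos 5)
Your proposal is correct and follows essentially the same route as the paper: the same partition of $\chiln$ into equivalence classes $[\w]$, the same triangle-inequality step $d(\x,\w)+d(\x,\partial\Lambda)\geqslant d(\w,\partial\Lambda)$ yielding the halved exponent, and the same application of Lemma~\ref{lemma:summability_1} to the inner sum. The only difference is that you spell out the final count (at most two clustered configurations per value of $d(\w,\partial\Lambda)$) that the paper leaves implicit in the phrase ``immediate consequence,'' and that detail is correct.
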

\begin{proof}
	Using the same decomposition of $\chiln$ into equivalence classes $[\w]$ with $\w\in\mathcal{C}$ as in the proof of Lemma~\ref{lemma:summability_2}, we obtain
	\begin{align*}
		\sum\limits_{\x\in\chiln} e^{-\mu \bigl[d(\x,\mathcal{C})+d(\x,\partial\Lambda)\bigr]}
			&=\, \sum\limits_{\w\in\mathcal{C}}\sum\limits_{\x\in[\w]} e^{-\mu \bigl[d(\x,\w)+d(\x,\partial\Lambda)\bigr]}
			\leqslant \sum\limits_{\w\in\mathcal{C}}\sum\limits_{\x\in[\w]} 
				e^{-\frac{\mu}{2}\bigl[d(\x,\w)+d(\w,\partial\Lambda)\bigr]},
	\end{align*}
	where the last inequality follows by $d(\x,\w) + d(\x,\partial\Lambda)\geqslant d(\w,\partial\Lambda)$. The claim is now an immediate consequence of Lemma~\ref{lemma:summability_1}.
\end{proof}

\bigskip
\noindent\textbf{Acknowledgments.} The research of V.B.~was partially supported by the Swiss National Science Foundation (Grant No.~P2EZP2-162235), which is gratefully acknowledged. 
\bibliography{XXZ_biblio}
\bibliographystyle{abbrv}
\end{document}